\documentclass[12pt]{article}    

\pdfminorversion=4

\usepackage{amssymb, amsmath, amsthm, mathtools} 
\usepackage{graphicx, dsfont, enumerate, color, setspace} 
\usepackage{url}
\usepackage{caption}
\usepackage{subcaption}

\usepackage{authblk}

\usepackage{fullpage} 

\usepackage{natbib}
\bibpunct{(}{)}{;}{a}{,}{,}


\renewcommand{\H}{W}
\newcommand{\Hi}{W_{[i]}}

\newcommand{\Zobs}{Z^{\mathrm{obs}}}
\newcommand{\Tobs}{T^{\mathrm{obs}}}

\newcommand{\Zset}{\mathcal{Z}} 

\newcommand{\Uset}{\mathcal{U}} 
\newcommand{\Uall}{\mathbb{U}}
\newcommand{\C}{\mathcal{C}}

\newcommand{\Hset}{\mathcal{H}}

\newcommand{\Ueff}{\mathcal{U}^{\mathrm{eff}}}

\newcommand{\iv}{\mathbb{I}}

\newcommand\indep{\protect\mathpalette{\protect\independenT}{\perp}}
\def\independenT#1#2{\mathrel{\rlap{$#1#2$}\mkern2mu{#1#2}}}

\newcommand{\w}{m}
\newcommand{\wfg}{m_{[f]}}


\newcommand{\pr}{{\rm pr}}


\newcommand{\Yobs}{Y^{\mathrm{obs}}}


\newcommand{\eff}{\mathrm{eff}}

\newcommand{\pval}{\mathrm{pval}}
\newcommand{\UZ}{\mathbb{U}(Z)}

\newtheorem{theorem}{Theorem}[section]
\newtheorem*{theorem*}{Theorem}

\newtheorem{definition}{Definition}
\newtheorem{proposition}{Proposition}
\newtheorem*{proposition*}{Proposition}

\newtheorem{example}{Example}


\title{\bf Conditional randomization tests of causal effects with interference between units\thanks{Email: \texttt{afeller@berkeley.edu}. The authors thank Peng Ding, Dean Eckles, Michael Hudgens, Kosuke Imai, Luke Miratrix, Todd Rogers, Fredrik S{\"a}vje, John Ternovski, and Teppei Yamamoto for helpful feedback and discussion. AF also thanks the excellent research partners at the School District of Philadelphia, especially Adrienne Reitano and Tonya Wolford.}}
\author[1]{Guillaume Basse}
\affil[1]{UC Berkeley, Dept. of Statistics}
\author[2]{Avi Feller}
\affil[2]{UC Berkeley, Goldman School of Public Policy and Dept. of Statistics}
\author[3]{Panos Toulis}
\affil[3]{University of Chicago, Booth School of Business}

\date{\today}

\begin{document}

\maketitle


\begin{abstract}
Many causal questions involve interactions between units, also known as interference, for example between individuals in households, students in schools, or firms in markets. 
In this paper we formalize the concept of a conditioning mechanism, which provides a framework for  constructing 
valid and powerful randomization tests under general forms of interference.
We describe our framework in the context of two-stage randomized designs and apply our approach to a randomized evaluation of an intervention targeting student absenteeism in the School District of Philadelphia. We show improvements over existing methods in terms of computational and statistical power.
\end{abstract}

\doublespacing


\section{Introduction}
\label{section:introduction}
Classical approaches to causal inference assume that units do not interact with each other, known as the no-interference assumption~\citep{cox1958planning}. Many causal questions, however, are inherently about interference between units~\citep{sobel2006randomized, hudgens2012toward}, and
standard approaches often break down. 
For example, randomization tests on sharp null hypotheses of no effect~\citep{fisher1935design}
are more challenging 
in the presence of interference because these hypotheses are usually not sharp when there are interactions between units.

\citet{aronow2012general} and~\citet{athey2016exact} addressed this issue by proposing conditional randomization tests restricted to a subset of units, often called focal units, and a subset of assignments for which the specified null hypothesis is sharp for every focal unit. While the randomization-based approaches in these papers have advantages over  model-based approaches~\citep{bowers2013reasoning, toulis2013estimation},
they either explicitly forbid any conditioning that depends on the observed treatment assignment \citep{athey2016exact} or only give limited guidance on how to carry out such conditioning \citep{aronow2012general}.
This constraint may affect testing power because, under interference,  realized interactions between units depend on the treatment assignment. 
The constraint also makes implementing the procedure as a permutation test more difficult, which is an 
often-neglected practical problem.

In this paper we develop a framework for constructing valid and powerful randomization tests under interference.
To do so, we extend current approaches by formalizing the concept of a conditioning mechanism.
The proposed framework enables flexible conditional randomization tests that can condition on the observed treatment assignment.
We show that current methods for randomization tests in the presence of interference are special cases of our framework
and correspond to mechanisms that generally fail to leverage the problem structure effectively. 
For example, current methods often include units whose outcomes provide no information for the null hypothesis of interest, leading to unnecessary loss of power.
In our framework, it is straightforward to exclude such units from the test via additional conditioning. 
Furthermore, more flexible conditioning typically yields permutation tests that are straightforward to implement, resulting in computational gains.

We apply this approach to two-stage randomized designs, which are often used for assessing causal effects related to interference~\citep{hudgens2012toward}. First, we show how to apply our framework in this setting by 
suggesting concrete conditioning mechanisms for various hypotheses. Second, we analyze data from a two-stage randomized evaluation of an intervention targeting student absenteeism in the School District of Philadelphia. 
Our test is more powerful than alternative methods when applied to the absenteeism study, with a roughly one-third increase in statistical power.
Furthermore, our method yields a permutation test on the exposures of interest;
alternative methods cannot be implemented as permutation tests, instead requiring complicated adjustments.


\section{General Results for Randomization Testing}
\label{section:theory}

\subsection{Classical randomization tests}
\label{section:background}
Consider $N$ units indexed by $i = 1, \ldots, N$, and a binary treatment assignment vector 
$Z \in \{0,1\}^N$,
where the $i$-th component, $Z_i$, is the treatment assignment of unit $i$. The assignment vector is sampled with probability $\pr(Z)$. Denote by $Y_i(Z)$ the scalar potential 
outcome of unit $i$ under assignment vector $Z$. Under the stable unit 
treatment value assumption~\citep{rubin1980comment}, the potential outcome of unit $i$ depends only 
on its own assignment. Each unit therefore has two potential outcomes, typically denoted as $Y_i(1)$ and $Y_i(0)$, which correspond to outcomes when unit $i$ receives treatment or control, respectively. A classic goal is to test the sharp null hypothesis of zero treatment effect for all units,
\begin{equation}\label{eq:sharp-null}
	H_0 \, : \, Y_i(1) =Y_i(0) \quad (i = 1, \ldots, N).
\end{equation}

We can assess $H_0$ by randomization~\citep{fisher1935design}. 
Let $T(Z \mid Y)$ denote the test statistic; for example, 
$T(Z \mid Y) = \mathrm{Ave}(Y_i \mid Z_i=1) - 
\mathrm{Ave}(Y_i \mid Z_i=0)$ 
is the usual difference in means between treated and 
control units, where $\mathrm{Ave}$ denotes sample average.
Let $\Tobs = T(\Zobs \mid \Yobs)$ 
denote the observed value of the test statistic, 
where $\Zobs\sim \pr(\Zobs)$ is the observed 
assignment vector in the experiment, and $\Yobs = Y(\Zobs)$ is the 
corresponding observed outcome vector. Finally, calculate the p-value 
\begin{equation}\label{eq:sharp-null-pval}\pval(\Zobs) = E_Z[\iv\{T(Z \mid \Yobs) \geq \Tobs\}],\end{equation}
where $\iv(\cdot)$ is the indicator function, and $E_Z$ is the 
expectation with respect to the distribution of $Z$. This test is 
valid at any level $\alpha$; that is, 
$\pr\{ \pval(\Zobs) \leq \alpha \} \leq \alpha$, for all 
$\alpha\in[0, 1]$ when the null hypothesis is true.
The key property that ensures validity of \eqref{eq:sharp-null-pval} is that, under $H_0$, the value of $T(Z \mid Y)$ can be imputed for every possible counterfactual assignment vector $Z'$, using only outcomes $\Yobs$ observed under $\Zobs$. 
This property allows us to construct the correct sampling distribution of the test statistic. We state the property formally in the following definition, as it will be useful for extending the classical randomization test to settings with interference.

\begin{definition}
A test statistic $T(Z\mid Y)$ is imputable with respect to a null hypothesis $H_0$ if for all $Z, Z'$, for which for which $\pr(Z)>0$ and $\pr(Z')>0$,
\begin{equation}
\label{eq:T}
T\{Z' \mid Y(Z')\} = T\{Z' \mid Y(Z)\}.
\end{equation}
\end{definition}
The key property of an imputable test statistic is that we can simulate its sampling distribution under the null hypothesis $H_0$, even though we only observe one vector of outcomes, namely $Y(\Zobs)$. 
In the classical setting with no interference,~\eqref{eq:T} follows from the stable unit treatment value assumption and the sharp null hypothesis in~\eqref{eq:sharp-null}, which together imply 
that $Y(Z') = Y(Z)$, for any possible $Z, Z'$. Thus, in the classical setting all potential outcomes are imputable, and, by extension, any test statistic is imputable.

\subsection{Randomization tests via conditioning mechanisms}
We now demonstrate that we can obtain valid tests without requiring the stable unit treatment value assumption or a sharp null hypothesis. To do so, we introduce the concept of a conditioning event, $\C$, which is a random variable that is realized in the experiment; we leave this concept abstract for now and give concrete examples below.
The key idea is to choose an event space and some conditional distribution $\w(\C \mid Z)$ on that space, such that, conditional on $\C$,  a test statistic $T(Z\mid Y, \C)$ is  imputable with respect to the null hypothesis.  
We refer to 
 $\w(\C \mid Z)$ as the conditioning mechanism;
 $\w(\C \mid Z)$ and the design $\pr(Z)$ together induce a joint distribution, $\pr(Z, \C; \w) = \w(\C \mid Z) \pr(Z)$.
With these concepts, we can now state our first main result.
\begin{theorem}
\label{th:abstract}
Let $H_0$ be a null hypothesis and $T(Z \mid Y, \C)$
a test statistic, such that $T$ is imputable with respect to $H_0$ 
under some conditioning mechanism $\w(\C \mid Z)$;
that is, under $H_0$,	%
\begin{align}
\label{eq:T2}
	T\{Z' \mid Y(Z'), \C\} = T\{Z' \mid Y(Z), \C \},
	\end{align}
for all $Z, Z',\C$ for which $\pr(Z, \C; m) > 0$ and $\pr(Z', \C; m)> 0$. 
Consider the procedure where we first draw $\C \sim \w(\C \mid \Zobs)$, and then compute the conditional p-value,
\begin{align}
\label{eq:rand-distr}
\pval(\Zobs; \C) = E_Z[\iv\{T(Z \mid \Yobs, \C) > \Tobs\} \mid \C],
\end{align}
	where $\Tobs = T(\Zobs \mid \Yobs, \C)$, and the expectation is with respect to $\pr(Z \mid \C) = \pr(Z, \C; \w) / \pr(\C)$.
	This procedure
	is valid  at any 
	level, that is,
	$\pr\{\pval(\Zobs; \C) \leq \alpha \mid \C \} \leq \alpha$, 
    for any $\alpha\in [0, 1]$, under $H_0$.
\end{theorem}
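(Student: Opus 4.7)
The plan is to reduce the claim to the standard super-uniformity of randomization p-values, applied conditionally on $\C$. Once I show that, given $\C$, the observed statistic $\Tobs$ is a draw from the very distribution whose tail is computed in \eqref{eq:rand-distr}, the conclusion follows from the generic fact that a survival function evaluated at its own random argument is super-uniform.

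First, I would identify the joint law of $(\Zobs,\C)$. By the two-step generative description --- the experiment produces $\Zobs\sim\pr(Z)$, and the analyst then draws $\C\sim\w(\C\mid\Zobs)$ --- the joint is $\w(\C\mid\Zobs)\pr(\Zobs)=\pr(Z,\C;\w)$. Conditioning on $\C$ therefore makes $\Zobs$ a draw from $\pr(Z\mid\C)$, which is exactly the reference law on the right-hand side of \eqref{eq:rand-distr}.

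Second, I would use the imputability relation \eqref{eq:T2} to remove the dependence of the reference tail on unobserved potential outcomes. Under $H_0$, applying \eqref{eq:T2} with the observed assignment as the second argument (so that $Y(\Zobs)=\Yobs$) gives $T\{Z\mid Y(Z),\C\}=T\{Z\mid\Yobs,\C\}$ for every $Z$ with $\pr(Z,\C;\w)>0$. Combined with the first step, this means that under the reference law $Z\sim\pr(\cdot\mid\C)$, the imputed statistic $T\{Z\mid\Yobs,\C\}$ has the same conditional distribution given $\C$ as the ``true'' statistic $T\{Z\mid Y(Z),\C\}$, which is in turn the conditional distribution of $\Tobs$.

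Third, I would invoke the super-uniformity lemma: for any random variable $X$ with survival function $G(t)=\pr(X\geq t)$, $\pr\{G(X)\leq\alpha\}\leq\alpha$. Applying this conditionally on $\C$ with $X=\Tobs$, and rewriting \eqref{eq:rand-distr} as $G(\Tobs)$ via the previous step, yields $\pr\{\pval(\Zobs;\C)\leq\alpha\mid\C\}\leq\alpha$. The main technical subtlety I would need to handle is the strict inequality ``$>$'' in \eqref{eq:rand-distr}: for discrete $T$ with positive tie probability, the strict-inequality survival function does not in general yield a super-uniform p-value, so the argument would either appeal to the weak-inequality version (which upper-bounds the strict one and is always super-uniform) or assume auxiliary randomization to break ties. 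This tie-handling step is the only non-mechanical part of the proof.
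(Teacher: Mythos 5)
Your proposal is correct and follows essentially the same route as the paper's own proof: condition on $\C$, note that the joint law $\w(\C\mid Z)\pr(Z)$ makes $\Zobs\sim\pr(\cdot\mid\C)$ the reference law, use imputability \eqref{eq:T2} to identify the distribution of $T(Z\mid\Yobs,\C)$ with that of $T\{Z\mid Y(Z),\C\}$, and conclude by super-uniformity of the survival function evaluated at its own argument. Your tie-handling remark is the one point where you are more careful than the paper: the theorem statement defines the p-value with a strict inequality, but the appendix proof silently switches to $\geq$ and invokes the probability integral transform as if $p_\C$ were exactly uniform; as you note, only the weak-inequality version is guaranteed super-uniform for the discrete randomization distributions at hand, so your explicit flag of this discrepancy is a genuine (if minor) improvement rather than a gap.
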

Equation~\eqref{eq:T2} is the critical property that the test statistic is imputable, and directly generalizes~\eqref{eq:T}.
As before, the key implication of equation~\eqref{eq:T2} is that  we can simulate from the null distribution of $T\{Z \mid Y(Z), \C\}$, given any possible conditioning event $\C$.
Theorem~\ref{th:abstract} allows us to extend conditional randomization testing to more complicated settings, including testing under interference. 
Before turning to these settings, we briefly demonstrate that classical examples of randomization testing are special cases of Theorem~\ref{th:abstract}.

\begin{example}
	Let the conditioning event space be such that $T(Z \mid Y, \C) \equiv T(Z \mid Y)$ and $\C \indep Z$. Then the procedure in Theorem~\ref{th:abstract} reduces to the classical Fisher randomization test described in Section~\ref{section:background}. 
\end{example}

\begin{example}
\cite{hennessy2016conditional} propose a conditional test that adjusts for covariate imbalance, quantified via a function $B(Z, X)$, where $X$ denotes a covariate vector. For instance, $B$ may be 
the vector of covariate means in each treatment arm, 
$B(Z, X) = \{\mathrm{Ave}(X_i | Z_i=1), 
\mathrm{Ave}(X_i | Z_i=0)\}$. Let $\pr(Z) = \text{Unif}\,\{(0,1)^N\}$ be a Bernoulli randomization design, and consider the conditioning mechanism
defined as $\pr(Z, \C) = \iv\{B(Z, X) = \C\} \pr(Z)$. Let 
$T(Z \mid Y, \C) \equiv T(Z \mid Y)$ be independent of $\C$, and let $H_0$ be as in~\eqref{eq:sharp-null}. Then the procedure of 
Theorem~\ref{th:abstract} corresponds exactly to that of \citet{hennessy2016conditional}. 
\end{example}

\section{Randomization Tests for General Exposure Contrasts}
\label{section:testing-interference}
\subsection{General exposure contrasts}
\label{section:general-exposure}
We now turn to constructing valid randomization tests in the presence of interference.
Following~\citet{manski2013identification}
and \citet{aronow2013estimating}, we consider an 
exposure mapping $h_i(Z): \{0, 1\}^N\to \Hset$, where $\Hset$ is an arbitrary set of possible treatment 
exposures equipped with an equality relationship.
Given an exposure mapping, a natural assumption that generalizes the classical stable unit treatment value assumption is
\begin{equation}\label{a:sutva2}
	Y_i(Z) = Y_i(Z') \quad (i=1, \ldots, N)\;\text{for all}\;Z, Z'~\text{for which}~h_i(Z) = h_i(Z').
\end{equation}
This assumption states that potential outcomes are functions only of the exposure, rather than of the entire assignment vector. 
In the most restrictive case of no interference, the exposure mapping is $h_i(Z) = Z_i$;
in the most general case without any restrictions on interference, the exposure mapping is $h_i(Z) = Z$. 
An example of an intermediate case is if $h_i(Z) = \sum_{j\in\mathcal{N}_i} Z_j$, 
where $\mathcal{N}_i$ is the set of unit $i$'s neighbors in some network between units, and the exposure mapping of $i$ is therefore the number of $i$'s treated 
neighbors~\citep{toulis2013estimation}. In these examples, 
we implicitly defined $\Hset = \{0, 1\}$, $\Hset = \{0, 1\}^N$, and $\Hset = \mathbb{N}$, respectively.

We can now formulate hypothesis tests on  contrasts between treatment exposures. Let $\{a, b\} \subseteq \Hset$ be two exposures of interest. The null hypothesis on the  contrast between exposures $a$ and $b$ is
\begin{equation}
\label{eq:Ho_contrast}
	H_0:~Y_i(Z) = Y_i(Z') \quad (i=1,\ldots, N)\;\text{for all}~
	Z, Z'~\text{for which}~h_i(Z), h_i(Z') \in \{a,b\}.
\end{equation}
The classical sharp null hypothesis in~\eqref{eq:sharp-null} is a special case of~\eqref{eq:Ho_contrast}, with $\Hset=\{a, b\}=\{0, 1\}$. Under the no interference setting of~\eqref{eq:sharp-null}, we can permute the vector of unit exposures $\{a, b\}$ by permuting the treatment assignment vector because the null hypothesis contains all possible exposures.
In most interference settings, however, the null hypothesis in~\eqref{eq:Ho_contrast} is not sharp because it only considers a subset of possible exposures.
As a result, observing $Y(\Zobs)$ gives only limited information about counterfactual outcomes $Y(Z')$, with $Z' \neq \Zobs$. Since $h_i$ may have arbitrary form, we cannot permute unit exposures by naively permuting the treatment assignment vector.

\subsection{Constructing valid tests for general exposure contrasts}
\label{section:valid-test}
Testing a contrast hypothesis as in~\eqref{eq:Ho_contrast} is challenging because only a subset of units is exposed to exposures $a$ or $b$, and only for a subset of assignment vectors. We therefore construct conditioning events in terms of both units and treatment assignment vectors.
Specifically, let $\mathbb{C} = \{(\Uset, \Zset) : \Uset\subseteq\mathbb{U}, 
	\Zset\subseteq\mathbb{Z}\}$ be the space of conditioning events, where $\mathbb{U}$ denotes the power set of units, and $\mathbb{Z}$ 
	denotes the power set of assignment vectors.
For some conditioning event $\C = (\Uset, \Zset)\in\mathbb{C}$, the conditioning mechanism can be decomposed, without loss of generality, as
\begin{align}
\label{eq:deco}
m(\C \mid Z) = f(\Uset \mid Z) g(\Zset \mid \Uset, Z),
\end{align}
where $f$ and $g$ are distributions over $\mathbb{U}$ and $\mathbb{Z}$, respectively. 
Given conditioning event $\C=(\Uset, \Zset)$, 
we consider test statistics, $T(Z\mid Y, \C)$, that depend 
only on outcomes of units in $\Uset$; following terminology in~\cite{athey2016exact}, we call $\Uset$ the set of focal units.
For example, we can set $T(Z \mid Y, \C)$ to be
the difference in means between focal units exposed to $a$ and units exposed to $b$:
\begin{equation}
	\label{eq:T-contrast}
T(Z \mid Y, \C) =	\mathrm{Ave}\{Y_i \mid i\in\Uset, h_i(Z)=a\} - \mathrm{Ave}\{Y_i \mid i\in\Uset, h_i(Z)=b\}.
	\end{equation}
\begin{theorem}
\label{thm:concrete}
	Let $H_0$ be a null hypothesis as in~\eqref{eq:Ho_contrast}, let $\w(\C \mid Z)$ be a conditioning mechanism as in~\eqref{eq:deco}, let $\C = (\Uset, \Zset)$, and let $T$ be a test statistic defined only on focal units, as in~\eqref{eq:T-contrast}.
	Then, $T$ is imputable under $H_0$ if $\w(\C \mid Z) > 0$ implies that $Z\in\Zset$, 
    and for every $i\in\Uset$ and $Z'\in\Zset$, that
	\begin{equation}
\label{eq:c1}
h_i(Z') \in\{a, b\},~\,h_i(Z)\in\{a, b\},
\end{equation}
or
\begin{equation}
 \label{eq:c2}
h_i(Z') = h_i(Z),~\,h_i(Z)\notin\{a, b\}.
 \end{equation} 
If $T$ is imputable the randomization test for $H_0$ described in Theorem~\ref{th:abstract} is valid at any 
	level $\alpha$. 
\end{theorem}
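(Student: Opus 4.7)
The plan is to reduce everything to Theorem~\ref{th:abstract}: once we verify that the test statistic $T$ in~\eqref{eq:T-contrast} is imputable with respect to $H_0$ under the conditioning mechanism $\w(\C \mid Z)$, validity of the randomization test is immediate. So the proof boils down to checking the imputability identity~\eqref{eq:T2}, namely $T\{Z' \mid Y(Z'), \C\} = T\{Z' \mid Y(Z), \C\}$, for every pair $Z, Z'$ satisfying $\pr(Z, \C; \w) > 0$ and $\pr(Z', \C; \w) > 0$.

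First I would record two structural observations. The first is that the hypothesis ``$\w(\C \mid Z) > 0$ implies $Z \in \Zset$'' forces both $Z$ and $Z'$ to lie in $\Zset$, so conditions~\eqref{eq:c1}--\eqref{eq:c2} apply to every $i \in \Uset$ for this pair. The second is that, from the definition of $T$ in~\eqref{eq:T-contrast}, the value of $T(Z' \mid Y, \C)$ depends only on the outcomes $\{Y_i : i \in \Uset,\ h_i(Z') \in \{a,b\}\}$. It therefore suffices to show $Y_i(Z') = Y_i(Z)$ for each focal unit $i$ with $h_i(Z') \in \{a,b\}$.

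The main step is a short case analysis. Fix such an $i$. Condition~\eqref{eq:c2} requires $h_i(Z') \notin \{a,b\}$ and is therefore ruled out, so~\eqref{eq:c1} must hold, giving $h_i(Z) \in \{a,b\}$ as well. The null hypothesis $H_0$ in~\eqref{eq:Ho_contrast} then directly delivers $Y_i(Z') = Y_i(Z)$, which is exactly what is needed; substituting into~\eqref{eq:T-contrast} shows that the two sides of~\eqref{eq:T2} coincide, establishing imputability, and an appeal to Theorem~\ref{th:abstract} completes the proof.

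I do not anticipate a genuine technical obstacle; the argument is essentially a careful unpacking of the definitions. The one subtlety worth flagging is the role of condition~\eqref{eq:c2}: it contributes no positive content to the imputability argument itself, since such units never enter the test statistic, but it broadens the admissible conditioning mechanisms by permitting focal units whose $Z'$-exposure lies outside $\{a,b\}$, provided the exposure does not change between $Z$ and $Z'$. Making explicit that~\eqref{eq:c1} and~\eqref{eq:c2} together cover every pair $(i, Z')$ with $i \in \Uset$ and $Z, Z' \in \Zset$ is the piece of bookkeeping I would be most careful about when writing this up.
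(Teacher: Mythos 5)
Your proposal is correct and follows essentially the same route as the paper: reduce to Theorem~\ref{th:abstract} and verify imputability by a case analysis on whether a focal unit's exposure lies in $\{a,b\}$, invoking $H_0$ in the case covered by~\eqref{eq:c1}. The only difference is cosmetic: the paper also treats the~\eqref{eq:c2} case, using the exposure-mapping assumption~\eqref{a:sutva2} to conclude $Y_i(Z')=Y_i(Z)$ for \emph{all} focal units (so that any statistic of the focal outcomes is imputable), whereas you correctly observe that for the specific difference-in-means statistic~\eqref{eq:T-contrast} those units never enter and that case may be skipped.
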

Building on Theorem~\ref{thm:concrete}, we can construct a 
family of valid conditional randomization tests by enumerating 
the assignment vectors for which conditions~\eqref{eq:c1} and~\eqref{eq:c2} hold. 
As an example, for any choice of $f(\Uset \mid Z)$ we could define $g(\Zset \mid \Uset, Z)$ as follows:
\begin{align}
\label{eq:g}
g(\Zset \mid \Uset, Z) = 1,~\text{only if}~\Zset = \{Z'\in\mathbb{Z}: \text{Equations~\eqref{eq:c1} and~\eqref{eq:c2} are satisfied for}~Z'\}.
\end{align}
With this definition, $g$ is degenerate, and so the conditioning mechanism $m(\C \mid Z)$ is indexed solely by the conditional distribution, $f$, of focal units; we denote these conditioning mechanisms $\wfg$. Thus, our methodology provides many possible conditioning mechanisms that yield valid conditional randomization tests by construction. 
We can then select conditioning mechanisms with desired characteristics, such as high power. For example, we can choose $f$ to maximize the expected number of focal units whose outcomes are informative 
about $H_0$. We refer to this set of units as the set of effective focal units, $\eff(\Uset)$, where 
$\eff(\Uset) = \{i\in\Uset: h_i(\Zobs)= a~\text{or}~b\}$. Similarly, we could ensure that the number of possible randomizations is also large, and instead maximize the quantity $|\eff(\Uset)| |\Zset|$. Many choices are possible and should be tailored to the specific application.

\section{Interference in two-stage randomized trials}
\label{section:two-stage}
\subsection{Two-stage randomized trials}
\label{sec:two_stage}
We now turn to the use of conditional randomization tests in two-stage 
randomized trials, which are used to assess spillovers between 
units~\citep{hudgens2012toward}. Specifically, we consider the setting of~\citet{basse2016analyzing}, in which $N$ units reside in $K$ households indexed by $k = 1, \ldots, K$.
In the first stage of the two-stage randomized trial, $K_1$ households are assigned to treatment, completely at 
random. In the second stage, one individual in each treated household is assigned to treatment, completely at random. 
As before, $Z_i\in\{0, 1\}$ is the assignment of unit $i$, and $Z=(Z_1, \ldots, Z_N)$ is the entire 
assignment vector. 
There is a residence index $R_{ij}$, such that $R_{ij}=1$ if unit $i$ resides in household $j$, and  is 0 otherwise. 
Let $[i] = \sum_j j R_{ij}$ denote the household wherein unit $i$ resides.
Finally, let $\H = (\H_1, \ldots, \H_K)$ denote the assignment vector on the household level, 
so that $\H_j  = \sum_{i} Z_i R_{ij}$.

The stable unit treatment assumption is not realistic in this context, so we make two assumptions on the 
interference structure that will imply a specific exposure mapping. First, we make the partial interference 
assumption \citep{sobel2006randomized}: units can interact within, but not between, households. 
Second, 
we make the stratified interference assumption \citep{hudgens2012toward}: unit $i$'s potential outcomes only depend 
on the number of units treated in the household, here 0 or 1, rather than the precise identity of the treated unit. 
~\citet{manski2013identification} calls this the anonymous interactions assumption. 
See~\citet{hudgens2012toward} for additional discussion.

These two assumptions can be expressed by the exposure mapping $h_i(Z) = (Z_i, \Hi)$. Since the potential outcome of unit $i$ depends only on $h_i(Z)$ by the assumption in equation~\eqref{a:sutva2}, for brevity we will use
$Y_i(Z_i, \Hi)$ to denote the value of $Y_i(Z)$.
Thus, unit $i$'s potential outcome can take only three values:
$$
Y_i(Z) \in \{Y_i(0, 0), Y_i(0, 1), Y_i(1, 1)\},
$$
that is, $Y_i(0,0)$ if unit $i$ is a control unit in a control household; $Y_i(0,1)$ if 
unit $i$ is a control unit in a treated household; and $Y_i(1,1)$ if unit $i$ is a treated unit in a treated household.
The fourth combination, $Y_i(0,1)$, is not possible because when unit $i$ is 
treated, household $[i]$ is also treated. Thus, the space of exposures is $\Hset = \{ a, b, c\}$,
with $a = (0,0), b = (0,1)$, and $c = (1,1)$. 

\subsection{A valid test for spillovers in two-stage designs}
\label{sec:spillovers}
We now focus on testing the null hypothesis of no spillover effect:
\begin{equation}
\label{eq:Ho_spill}
	H_0^s:~Y_i(0,0) = Y_i(0,1) \quad (i=1, \ldots, N).
	\end{equation}
The Supplementary Material contains analysis and results for the null hypothesis of no primary effect, $H_0^p: Y_i(0, 0) = Y_i(1, 1)$, for every unit $i$. 
Equation~\eqref{eq:Ho_spill} is a special case of the exposure contrast as defined in \eqref{eq:Ho_contrast}, with $a=(0, 0)$ and $b=(0, 1)$. As in Section~\ref{section:valid-test}, we set the test statistic to be the difference in means between the two exposures. The challenge is to find a conditioning mechanism that guarantees validity while preserving power.

We impose two constraints on our choice of focal units. First, units that are exposed to $c = (1,1)$  are excluded from being selected as focal units because these units do not contribute to the test statistic. Equivalently, we want to exclude units assigned to $Z_i = 1$ from being focal. This is therefore an example of conditioning using observed assignment $Z$, which avoids wasting units in the randomization test. 
Second, we choose a single non-treated unit at random from each household as the focal unit. In the Supplementary Material, we show that choosing one focal unit per household leads to a randomization test that is equivalent to a permutation test on the exposures of interest, $a=(0,0)$ and $b=(0, 1)$, which greatly simplifies computation.

\begin{proposition}\label{prop:two-stages-test}
Consider the following testing procedure:
	\begin{enumerate}
		\item in control households ($W_j = 0$), choose one unit at random. In treated households ($W_j = 1$), 
		choose one unit at random among the non-treated units ($Z_i = 0$); 
		\item compute the distribution of the test statistic in equation~\eqref{eq:T-contrast}  induced by 
		all permutations of exposures on the chosen units, using $a=(0, 0)$ and $b=(0, 1)$ as the 
		contrasted exposures;
		\item compute the p-value.
	\end{enumerate}
	Steps 1--3 define a valid procedure for testing the 
	null hypothesis of no spillover effect, $H_0^s$.
\end{proposition}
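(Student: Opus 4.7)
The plan is to cast the described procedure as an instance of Theorem~\ref{thm:concrete} by exhibiting explicit choices of $f$ and $g$ in decomposition~\eqref{eq:deco} and verifying the imputability conditions~\eqref{eq:c1} and~\eqref{eq:c2}; validity at any level $\alpha$ then follows immediately. In parallel, I would justify that the literal permutation test in step 2 coincides with the conditional randomization distribution of Theorem~\ref{th:abstract}, so that the p-value computed in step 3 is exactly the valid object $\pval(\Zobs;\C)$.

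First, I would formalize the conditioning mechanism induced by step 1. Write $n_j$ for the number of units in household $j$. Define $f(\Uset \mid Z)$ to pick, independently across households, one focal unit uniformly from the $n_j$ units if $\H_j = 0$ and uniformly from the $n_j - 1$ non-treated units if $\H_j = 1$. Define $g(\Zset \mid \Uset, Z)$ to be degenerate as in~\eqref{eq:g}, so that $\Zset$ is the set of all valid two-stage assignments $Z'$ with $Z'_i = 0$ for every $i \in \Uset$.

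Second, I would verify that the difference-in-means statistic~\eqref{eq:T-contrast} is imputable under this mechanism. By construction $Z_i = 0$ for every focal unit, so $h_i(\Zobs) = (0, \H_{[i]}) \in \{a, b\}$ with $a=(0,0)$ and $b=(0,1)$. The same reasoning applied to any $Z' \in \Zset$ yields $h_i(Z') \in \{a, b\}$, so condition~\eqref{eq:c1} of Theorem~\ref{thm:concrete} holds for every $i \in \Uset$ and every $Z' \in \Zset$, and condition~\eqref{eq:c2} is vacuous. Theorem~\ref{thm:concrete} then guarantees validity of the conditional randomization test based on this conditioning mechanism.

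The main obstacle is to justify that steps 2 and 3, which use ordinary permutations of exposures across the chosen units, actually compute $\pval(\Zobs; \C)$ rather than some weighted counterpart; in the conditional distribution $\pr(Z \mid \C)$ the household sizes $n_j$ appear through both the design and $f$, and only after cancellation does a symmetric distribution emerge. To carry this out I would combine $\pr(Z) = \binom{K}{K_1}^{-1}\prod_{j:\H_j=1} n_j^{-1}$ with $f(\Uset \mid Z)$ to obtain $\pr(Z) f(\Uset \mid Z) = \binom{K}{K_1}^{-1}\prod_{j=1}^K n_j^{-1} \prod_{j:\H_j=1}(n_j-1)^{-1}$ for $Z \in \Zset$. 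Summing this over the $n_j - 1$ admissible choices of treated unit within each treated household cancels the $(n_j-1)^{-1}$ factors, so the marginal of $\H$ given $\C$ is uniform on the $\binom{K}{K_1}$ subsets of size $K_1$. Since focal-unit exposures are a deterministic function of $\H$, their conditional distribution is exactly the uniform permutation distribution of $K_1$ copies of $b$ and $K - K_1$ copies of $a$ over the $K$ focal units. Thus the permutation test in step 2 produces $\pval(\Zobs; \C)$, and validity follows from Theorem~\ref{th:abstract}.
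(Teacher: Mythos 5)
Your proposal is correct and takes essentially the same route as the paper's own proof: it casts the procedure as an instance of Theorem~\ref{thm:concrete} with the same $f$ (one non-treated focal chosen uniformly per household) and the same degenerate $g$, checks imputability via condition~\eqref{eq:c1} (with \eqref{eq:c2} vacuous), and then establishes through the same cancellation of the $n_j^{-1}$ and $(n_j-1)^{-1}$ factors that the conditional distribution of the household-level assignment given the conditioning event is uniform over the $\binom{K}{K_1}$ possibilities, so the focal exposures follow exactly the uniform permutation distribution used in step 2. The only cosmetic difference is that you marginalize the joint $\pr(Z)\,f(\Uset\mid Z)$ directly over the compatible $Z$, whereas the paper conditions on $W$ first and writes $\pr(H\mid\C)\propto \pr(\C\mid W)\pr(W)$; the two computations are identical.
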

We show in the Supplementary Material that the procedure in Proposition~\ref{prop:two-stages-test} is an application of Theorem~\ref{thm:concrete} with a conditioning mechanism defined by
%
\begin{equation}
\label{eq:cond-f}
	f(\Uset \mid Z) = \text{Unif}\,\left\{\Uset \subseteq \Uall \,:\, 
	Z_i \iv(i \in \Uset) = 0, \;\;\;\sum_{i'} \iv(i' \in \Uset) R_{i'j} = 1, 
	\;\;\text{for every}~i,j\right\}.
\end{equation}
As discussed earlier, the first constraint in~\eqref{eq:cond-f}
ensures that we only select focal units, $i\in\Uset$, that are not assigned to treatment; the second constraint restricts the focal set to one unit per household.

\subsection{Comparison with existing methods}
Our approach builds on several existing methods. 
~\citet{aronow2012general},
who outlines some ideas that we discuss here, develops a test for the null hypothesis of no spillover effect.
Although that paper does not exclude conditioning on $Z$, it gives limited guidance on how such conditioning would work.
~\citet{athey2016exact} extends the method of \citet{aronow2012general} to a broader class of hypotheses,
but explicitly forbids the selection of focal units to depend on 
the realized assignment $Z$. 
In the Supplementary Material, we show that their approaches are equivalent to choosing a set of focal units independent of $Z$; that is, $f(\Uset \mid Z) \equiv f(\Uset)$. In fact, in the two-stage design we consider, the methods of~\citet{aronow2012general} and~\citet{athey2016exact} are identical; see the Supplementary Material.

\citet{athey2016exact} recognize that choosing focal units completely at 
random often yields tests with low power. They therefore propose more sophisticated approaches for selecting focal 
units using additional information. 
For instance, \citet{athey2016exact} advocate selecting focal units via $\epsilon$-nets: first select a focal unit, possibly at random, then choose 
subsequent focal units beyond a graph distance $\epsilon$ 
from that focal unit. In our applied example, this approach suggests choosing one focal unit at random from each 
household:
\begin{equation}
\label{eq:athey}
	f(\Uset \mid Z) \equiv f(\Uset) = \text{Unif} \, \left\{ \Uset \subseteq \Uall \,:\, \sum_i \iv(i \in \Uset) R_{ij} = 1,~\text{for every }j\right\}.
\end{equation}

Our proposed design in~\eqref{eq:cond-f} has two main advantages over the design in~\eqref{eq:athey}.   
First, we can implement our design via a simple permutation test, as described in Proposition~\ref{prop:two-stages-test}. This is not always possible for the design 
in~\eqref{eq:athey}. In fact, we show in the Supplementary Material that a conditioning mechanism based on~\eqref{eq:athey} is a permutation test only when households have equal size, which does not hold in our application.
In the absence of a permutation test, an analyst working with the conditioning mechanism defined by~\eqref{eq:athey} has to calculate the support of $g$ in~\eqref{eq:g} 
fully and exactly, and then take uniform draws over that set to sample from the correct randomization distribution. This calculation is exponentially hard. Moreover, there are no theoretical guarantees for when the test of~\citet{athey2016exact} can be implemented as a simple permutation test.

Second, unlike in our proposed design, the design in Equation~\eqref{eq:athey} may include treated units as focal units. Since treated units are not part of the effective focal set for testing the null hypothesis of no spillover effect, including them will reduce power. In particular, our design will always have at least as many effective focal units as the design in Equation~\eqref{eq:athey}, and at least as many assignment vectors in the randomization test.
To quantify this, suppose that all households have $n$ units.
We show in the Supplementary Material that for the
choice of $f(\Uset \mid Z) = f(\Uset)$ in~\eqref{eq:athey}, the number of effective focal units has distribution 
$|\text{eff}(\Uset)| \sim K - K_1 + \text{Binomial}(K_1, 1/n)$, where 
$K$ is the number of all households, and $K_1$ is the number of treated households, so $E\left\{|\text{eff}(\Uset)|\right\} = K - K_1(1 - 1/n)$. 
By contrast, the choice of $f(\Uset \mid Z)$ in 
Equation~\eqref{eq:cond-f} leads to a number of effective focal units that is always equal to $K$, 
the number of all households. 
For instance, in the experiment we describe next, there are 3,169 households with $n=2$ units. Restricting to this subset, the design in Equation~\eqref{eq:athey} has an average of 2,123 effective focal units, a reduction of one-third from our proposed design. 
%

Finally, \citet{rigdon2015exact} propose a method for calculating exact confidence intervals in two-stage randomized designs with binary outcomes. However, it is not applicable to our setting with continuous outcome, nor is the proposed approximation well-suited for tests of a given null hypothesis.

\subsection{Application to a school attendance experiment}
\label{section:application}
We illustrate our approach using a randomized trial of an intervention designed to increase student attendance in the School District of Philadelphia~\citep{rogers2016intervening}. Following the setup in~\citet{basse2016analyzing}, we focus on a subset of this experiment with $N = 8,654$ students in $K = 3,876$ multi-student households, of which 
$K_1= 2,568$ were treated. For this subset, the district sent targeted attendance information to the parents about only one randomly chosen student in that household. The outcome of interest is
the number of days absent during the remainder of the school year. Following~\citet{rosenbaum2002covariance}, we focus on regression-adjusted outcomes, adjusting for a vector of pre-treatment covariates, including demographics and prior year attendance.
Additional details on the analysis are included in the Supplementary Material, including results for the primary effect.

To assess spillovers, we sample 100 sets 
$\Uset^{(l)}$ ($l = 1, \ldots, 100$) for both ours and 
\cite{athey2016exact}'s choice of function $f(\Uset \mid Z)$. 
For each set, we compute $p$-values for the null hypothesis of no spillover effect $H_0^s$ in Equation~\eqref{eq:Ho_spill} and report whether it rejects with $p < 0.05$. Overall, the test using \cite{athey2016exact}'s method rejects the null hypothesis of no effect for $66\%$ of focal sets; the test using our method rejects the null hypothesis of no effect for $92\%$ of focal sets. 

We also obtain confidence intervals and Hodges--Lehmann point estimates by inverting a sequence of randomization tests under an additive treatment effect model, 
$Y_i(1,0) = Y_i(0,0) + \tau^s$ ($i=1, \ldots, N$). For each focal set, obtaining these quantities is straightforward given $\Uset$ via standard methods~\citep{rosenbaum2002covariance}. Aggregating information across focal sets, however, remains an open problem; we discuss this briefly in Section~\ref{section:discussion}. 
For simplicity, we summarize the results by presenting medians across focal sets. 
For our proposed approach, the median value of the Hodges--Lehmann point estimates  is $\widehat{\tau}^{(\mathrm{cond})} \approx -1$ day, with 95\% confidence 
interval  $[-1.70, -0.34]$. For the method of \citet{athey2016exact}, the 
median estimate is $\widehat{\tau}^{(\mathrm{rand})} \approx -1.1$ days, with  associated 95\%
confidence interval $[-1.84, -0.28]$. Across focal sets, the average width of the confidence intervals obtained via 
\cite{athey2016exact}'s method is 1.60, compared to 1.42 with our approach, a reduction of 11\%.

Results from both approaches are in line with those obtained 
by \cite{basse2016analyzing} via unbiased estimators. These confirm the presence of substantial within-household spillover effect that is nearly as large as the primary effect, suggesting that intra-household dynamics play a critical role in reducing student absenteeism and should be an important consideration in designing future interventions.

\section{Discussion}
\label{section:discussion}
Constructing appropriate conditioning mechanisms can be challenging in settings more complex than two-stage designs. Doing so requires understanding the interference structure and finding powerful conditioning mechanisms subject to that structure. 
Furthermore,
conditioning mechanisms produce a distribution of p-values
across random choices for the conditioning event. While this does not affect the validity of the test, it raises problems such as 
 interpretation and sensitivity of the test results~\citep{geyer2005fuzzy}. At the same time, the distribution itself may contain information useful to improve the power of the test. In ongoing research, we are working to use multiple testing methods to address this problem.




%

\singlespacing
\bibliographystyle{chicago}
\bibliography{ref}

\clearpage
\appendix

\doublespacing
\section{Proofs of theorems and statements}

\subsection{Proof of validity of classical Fisher test}

We reproduce the proof of~\cite{hennessy2016conditional} with
slight modifications. This proof will provide an introduction to the 
proof of the validity of the conditional test that follows.


\begin{proof}
We need to show that:
\begin{equation*}
	\pr( p \leq \alpha \mid H_0 ) \leq \alpha,~\text{for all}~\alpha \in [0,1],
\end{equation*}
where the probability is with respect to $\pr(\Zobs)$, and $p = \pval(\Zobs)$ is defined as
\begin{equation*}
	p = \pr\{T(Z \mid \Yobs) \geq T(\Zobs\mid \Yobs)\}.
\end{equation*}

Let $U$ be a random variable with the same distribution as $T(Z \mid \Yobs)$, 
as induced by $\pr(Z)$ and let $F_\mathrm{U}$ be its cumulative distribution 
function. We can then write
\begin{equation*}
	p = 1 - F_{\mathrm{U}}\{ T(\Zobs \mid \Yobs) \}.
\end{equation*}
By definition, under $H_0$ we have $Y(Z) = Y(\Zobs)$ for all $Z$, and so 
$T(Z \mid \Yobs) = T\{Z \mid Y(Z)\}$. It follows that, under $H_0$, $U$ 
has the same distribution as $T(Z \mid \Yobs)$. The randomness in 
$T(\Zobs\mid \Yobs)$ is induced by the randomness in $\Zobs$. In the 
testing procedure, $\Zobs \sim \pr(\Zobs)$. Combining with the above, 
we see that the distribution of $T(\Zobs\mid \Yobs)$ induced by 
$\pr(\Zobs)$ is the same as that of $U$ under $H_0$. We thus have
\begin{equation*}
	p = 1- F_{\mathrm{U}}(U).
\end{equation*}
By the probability integral transform theorem, $p$ is uniform, and so 
$\pr(p \leq \alpha \mid H_0) \leq \alpha$.
\end{proof}

\subsection{Proof of Theorem~\ref{th:abstract}}

The proof of Theorem~\ref{th:abstract} follows that of the classical 
Fisher test, with some important modifications.
\begin{theorem*}[Theorem~\ref{th:abstract}]
Let $H_0$ be a null hypothesis and $T(Z \mid Y, \C)$
a test statistic, such that $T$ is imputable with respect to $H_0$ 
under some conditioning mechanism $\w(\C \mid Z)$;
that is, under $H_0$, it holds that	%
\begin{align}
\label{eq:T2}
	T\{Z' \mid Y(Z'), \C\} = T\{Z' \mid Y(Z), \C \},
	\end{align}
for all $Z, Z',\C$, for which $\pr(Z, \C; m) > 0$ and $\pr(Z', \C; m)> 0$. 
Consider the procedure where we first draw $\C \sim \w(\C \mid \Zobs)$, and then compute the conditional p-value,
\begin{align}
\label{eq:rand-distr}
\pval(\Zobs; \C) = E_Z[\iv\{T(Z \mid \Yobs, \C) > \Tobs\} \mid \C],
\end{align}
	where $\Tobs = T(\Zobs \mid \Yobs, \C)$, and the expectation is with respect to $\pr(Z \mid \C) = \pr(Z, \C; \w) / \pr(\C)$.
	This procedure
	is valid  at any 
	level, that is,
	$\pr\{\pval(\Zobs; \C) \leq \alpha \mid \C \} \leq \alpha$,  
    for any $\alpha\in [0, 1]$.
\end{theorem*}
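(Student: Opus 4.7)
The plan is to mirror the classical Fisher randomization test proof reproduced above, but to work conditionally on $\C$ throughout. First I would fix a conditioning event $\C$ with $\pr(\C) > 0$ and note a key distributional identity: since the joint law of $(\Zobs, \C)$ generated by the testing procedure factors as $\pr(Z, \C; \w) = \w(\C \mid Z)\,\pr(Z)$, the conditional distribution of $\Zobs$ given $\C$ is exactly $\pr(Z \mid \C) = \pr(Z, \C; \w)/\pr(\C)$, which is precisely the distribution used to form the reference expectation in equation~\eqref{eq:rand-distr}.

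Second, I would invoke the imputability condition~\eqref{eq:T2} to bridge the observed and counterfactual worlds. Under $H_0$, applying~\eqref{eq:T2} with $Z = \Zobs$ and any $Z'$ in the conditional support of $\pr(\cdot \mid \C)$ yields $T\{Z' \mid Y(Z'), \C\} = T\{Z' \mid \Yobs, \C\}$. Consequently, defining $U$ as a random variable whose conditional distribution given $\C$ equals that of $T(Z \mid \Yobs, \C)$ with $Z \sim \pr(\cdot \mid \C)$, imputability shows that $U$ also has the same conditional distribution as $T\{Z \mid Y(Z), \C\}$. Combined with the first step, this means that $\Tobs = T\{\Zobs \mid Y(\Zobs), \C\}$ and $U$ share the same conditional law given $\C$.

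Third, I would rewrite the conditional p-value as $\pval(\Zobs; \C) = 1 - F_{U \mid \C}(\Tobs)$, where $F_{U \mid \C}$ is the conditional cumulative distribution function of $U$ given $\C$, and apply the probability integral transform conditionally on $\C$. Because $\Tobs \mid \C$ and $U \mid \C$ have the same law, $1 - F_{U \mid \C}(\Tobs)$ is stochastically at least as large as a $\mathrm{Unif}[0,1]$ random variable, yielding $\pr\{\pval(\Zobs; \C) \leq \alpha \mid \C\} \leq \alpha$ for every $\alpha \in [0,1]$.

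The main obstacle is the bookkeeping needed to distinguish the ``observed'' pair $(\Zobs, \C)$ drawn jointly via the two-step experimental procedure from the ``hypothetical'' $Z$ drawn from $\pr(\cdot \mid \C)$ inside the reference expectation, and to verify that these share a common conditional law given $\C$. Once this symmetry is in place, imputability~\eqref{eq:T2} does the real work: it guarantees that substituting $\Yobs$ for $Y(Z)$ inside the test statistic is lossless for every $Z$ in the conditional support, so the argument collapses to the classical probability integral transform applied under the conditional measure $\pr(\cdot \mid \C)$.
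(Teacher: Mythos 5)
Your proposal is correct and follows essentially the same route as the paper's own proof: fix $\C$, identify the reference distribution $\pr(Z\mid\C)$ with the conditional law of $\Zobs$ given $\C$, use imputability to equate the law of $\Tobs$ with that of the reference variable $U$, and conclude via the conditional probability integral transform. Your remark that $1-F_{U\mid\C}(\Tobs)$ is only stochastically no smaller than uniform is in fact slightly more careful than the paper's claim that $p_\C$ is exactly uniform, which is the right level of precision for discrete randomization distributions.
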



\begin{proof}

We need to show that
\begin{equation*}
	\pr(p_\C \leq \alpha \mid H_0, \C) \leq \alpha
\end{equation*}
for all $\C$ such that $\pr(\C \mid \Zobs) > 0$, where the probability is 
with respect to $\pr(\Zobs \mid \C)$, and $p_\C$ is defined as
\begin{equation*}
	p_\C = \pr\{T(Z \mid \Yobs, \C) \geq T(\Zobs \mid \Yobs, \C) \mid \C\}.
\end{equation*}
Fix $\C$. Let $U$ be a random variable with the same distribution as 
$T(Z \mid \Yobs, \C)$ as induced by $\pr(Z \mid \C)$ and let $F_\mathrm{U}$ be 
its cumulative distribution function. We can then write:
\begin{equation*}
	p_\C = 1 - F_{\mathrm{U}}\{T(\Zobs \mid \Yobs, \C)\}.
\end{equation*}
In the procedure, we have $\Zobs \sim \pr(\Zobs)$ and 
$\C \sim \pr(\C \mid \Zobs)$, implying that $\pr(\Zobs, \C) > 0$. 
So, by imputatability of the test statistic in 
Equation~\eqref{eq:T2} under $H_0$, 
\begin{equation*}
T\{Z \mid Y(Z), \C\} = T(Z \mid \Yobs, \C)
\end{equation*}
for all $Z \sim \pr(Z \mid \C)$, since this guarantees $\pr(Z, \C) > 0$.
This means that under $H_0$, $U$ has the same distribution as 
$T(Z \mid \Yobs, \C)$.
The randomness in $T(\Zobs \mid \Yobs, \C)$ is induced by the randomness
in $\Zobs$ conditional on $\C$. Combining with the above, we see that the
distribution of $T(\Zobs \mid \Yobs, \C)$ induced by $\pr(\Zobs \mid \C)$ 
is the same as that of $U$ under $H_0$. We thus have:
\begin{equation*}
	p_\C = 1 - F_{\mathrm{U}}(U).
\end{equation*}
By the probability integral transform theorem, $p_\C$ is uniform and so $\pr(p_\C \leq \alpha \mid H_0, \C) \leq \alpha$.
\end{proof}

\subsection{Proof of Theorem~\ref{thm:concrete}}

For the reader's convenience we repeat the definitions of the contrast null hypothesis, conditioning mechanism, and test statistic, which are used in Theorem~\ref{thm:concrete}:
\begin{align}
\label{eq:Ho_contrast}
	& H_0:~Y_i(Z) = Y_i(Z'), i=1,\ldots, N,~\text{for all}~
	Z, Z'~\text{for which}~h_i(Z), h_i(Z') \in \{a,b\},
    \\
 \label{eq:deco}
& m(\C \mid Z) = f(\Uset \mid Z) g(\Zset \mid \Uset, Z),\\
\label{eq:T-contrast}
& T(Z \mid Y, \C) =	\textsc{Ave}\{Y_i \mid i\in\Uset, h_i(Z)=a\} - \textsc{Ave}\{Y_i \mid i\in\Uset, h_i(Z)=b\},
\end{align}
where $\C = (\Uset, \Zset)$, and $\Uset, \Zset$ are 
any subsets of units and assignment vectors, respectively and $\textsc{Ave}$ denotes the average.
The main challenge is to prove that the conditions of the theorem 
ensure that the test statistic in Equation~\eqref{eq:T-contrast} is 
imputable under $H_0$. 
\begin{theorem*}[Theorem~\ref{thm:concrete}]
	Let $H_0$ be a null hypothesis as in Equation~\eqref{eq:Ho_contrast}, $\w(\C \mid Z)$ be a conditioning mechanism as in Equation~\eqref{eq:deco}, and $T$ be a test statistic defined only on focal units, as in Equation~\eqref{eq:T-contrast}.
	Then, $T$ is imputable under $H_0$ if $\w(\C \mid Z) > 0$ implies that $Z\in\Zset$, 
    and for every $i\in\Uset$ and $Z'\in\Zset$ that
	\begin{align}
\label{eq:c1}
h_i(Z') & \in\{a, b\},~\text{if}~h_i(Z)\in\{a, b\},
\\
 \label{eq:c2}
 h_i(Z') & =h_i(Z),~\text{if}~h_i(Z)\notin\{a, b\}.
 \end{align} 
If $T$ is imputable the randomization test for $H_0$ as described in Theorem~\ref{th:abstract} is valid at any 
	level $\alpha$. 
\end{theorem*}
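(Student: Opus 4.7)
The plan is to invoke Theorem~\ref{th:abstract}: once the difference-in-means test statistic in~\eqref{eq:T-contrast} is shown to be imputable with respect to $H_0$ under the conditioning mechanism $m$, validity at any level $\alpha$ follows for free. Concretely, I must verify the identity $T\{Z' \mid Y(Z'), \C\} = T\{Z' \mid Y(Z), \C\}$ whenever $\pr(Z, \C; m) > 0$ and $\pr(Z', \C; m) > 0$. Since $\pr(Z, \C; m) = \pr(Z)\,m(\C \mid Z)$, the assumption $m(\C \mid Z) > 0 \Rightarrow Z \in \Zset$ places both $Z$ and $Z'$ inside $\Zset$. Moreover, the statistic~\eqref{eq:T-contrast} depends only on the outcomes $\{Y_i : i \in \Uset,\; h_i(Z') \in \{a, b\}\}$, so the identity reduces to establishing $Y_i(Z') = Y_i(Z)$ for every focal unit $i$ whose $Z'$-exposure lies in $\{a, b\}$.

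The crux is a two-line case split. Fix $i \in \Uset$ with $h_i(Z') \in \{a, b\}$. If $h_i(Z) \notin \{a, b\}$, then condition~\eqref{eq:c2}, applied with $Z$ as the pivot and this $Z' \in \Zset$, forces $h_i(Z') = h_i(Z) \notin \{a, b\}$, contradicting the choice of $i$. Hence $h_i(Z) \in \{a, b\}$; both exposures then lie in $\{a, b\}$, and the contrast null $H_0$ in~\eqref{eq:Ho_contrast} delivers $Y_i(Z) = Y_i(Z')$. Averaging over the appropriate focal units in each arm makes the two versions of $T$ agree, so $T$ is imputable, and Theorem~\ref{th:abstract} finishes the argument.

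The main potential obstacle is conceptual rather than computational: conditions~\eqref{eq:c1}--\eqref{eq:c2} are phrased asymmetrically in $(Z, Z')$, with $Z$ playing the distinguished role of argument to $m(\C \mid Z)$, whereas imputability is a statement about arbitrary pairs in the support of $\pr(\cdot, \C; m)$. One has to recognize that it is precisely the contrapositive of~\eqref{eq:c2} applied in one direction that rules out the only bad case---a focal unit that is informative under $Z'$ but has an exposure outside $\{a, b\}$ under $Z$---and that~\eqref{eq:c1} is not strictly invoked for this particular test statistic; it is included to pin down the intended structure of $\Zset$ symmetrically rather than out of logical necessity for imputability.
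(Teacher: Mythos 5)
Your proposal is correct and follows essentially the same route as the paper's proof: establish that both $Z$ and $Z'$ lie in $\Zset$, show $Y_i(Z)=Y_i(Z')$ for the relevant focal units via a case split on whether the exposure lies in $\{a,b\}$, and then hand off to Theorem~\ref{th:abstract}. Your version is marginally leaner---by restricting to the focal units whose $Z'$-exposure is in $\{a,b\}$ you correctly observe that only the contrapositive of~\eqref{eq:c2} and the null are needed, whereas the paper also invokes the exposure-mapping assumption~\eqref{a:sutva2} to get $Y_\Uset(Z')=Y_\Uset(Z)$ on all of $\Uset$---but this is a refinement of the same argument, not a different one.
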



\begin{proof}
For a conditioning event $\C = (\Uset, \Zset)$, 
suppose that $\w(\C \mid Z) > 0$ implies that $Z \in \Zset$ and that:
\begin{equation*}
	\text{for all } i \in \Uset,  Z' \in \Zset, \quad 
    \begin{cases}
    	h_i(Z') \in \{a,b\} &\text{ if }\, h_i(Z) \in \{a,b\},\\
        h_i(Z') = h_i(Z) &\text{ if }\, h_i(Z) \not\in \{a,b\}.
    \end{cases}
\end{equation*}
Now let $Z, Z', \C$ be such that $\pr(Z, \C; m) > 0$ and 
$\pr(Z', \C; m) > 0$. By definition of a conditioning mechanism, this 
implies that $\w(\C \mid Z) > 0$ and $\w(C \mid Z') > 0$. It follows that 
$Z \in \Zset$ and $Z' \in \Zset$. Now take $i \in \Uset$. 
If $h_i(Z') \not\in \{a,b\}$, then, by assumption, $h_i(Z) = h_i(Z')$ 
since $Z, Z' \in \Zset$. And so by Equation~(5) of the main paper, we have 
that $Y_i(Z') = Y_i(Z)$. If instead $h_i(Z') \in \{a, b\}$, then 
$h_i(Z) \in \{a,b\}$ and so under the null hypothesis $Y_i(Z') = Y_i(Z)$, 
as well. Therefore, we proved that $Y_{\Uset}(Z') = Y_{\Uset}(Z)$, where 
$Y_{\Uset}(Z)$ denotes the subvector of outcomes of units in $\Uset$ under 
assignment vector $Z$. 
Since the test statistic, $T(Z \mid Y, \C)$, is defined only on $Y_\Uset$,
the subvector of outcomes of units in $\Uset$, it follows that 
$T\{Z' \mid Y(Z'), \C\} = T\{Z' \mid Y(Z), \C\}$, and so $T$ is imputable.
\end{proof}

\subsection{Proof of Proposition~1}

\begin{proposition*}
Consider the following testing procedure:
	\begin{enumerate}
		\item In control households ($W_j = 0$), choose one unit at random. In treated households ($W_j = 1$), 
		choose one unit at random among the non-treated units ($Z_i = 0$). 
		\item Compute the distribution of the test statistic in 
        Equation~\eqref{eq:T-contrast}  induced by all permutations of 
        exposures on the chosen focal units, using $a=(0, 0)$ and 
        $b=(0, 1)$ as the contrasted exposures.
		\item Compute the p-value.
	\end{enumerate}
	Steps 1-3 outline a procedure that is valid for testing the 
	null hypothesis of no spillover effect, $H_0^s$.
\end{proposition*}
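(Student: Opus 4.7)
The plan is to express the three-step procedure as an instance of Theorem~\ref{thm:concrete}. I take $f(\Uset \mid Z)$ to be the distribution in Equation~\eqref{eq:cond-f}, which selects one non-treated unit uniformly at random from each household, and set $g(\Zset \mid \Uset, Z)$ to be degenerate at $\Zset = \{Z' : \pr(Z') > 0 \text{ and } Z'_i = 0 \text{ for every } i \in \Uset\}$. This $\Zset$ depends on $\Uset$ alone, so the conditioning mechanism $\w(\C \mid Z) = f(\Uset \mid Z) g(\Zset \mid \Uset, Z)$ is well defined and fits the decomposition in Equation~\eqref{eq:deco}.

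Checking the hypotheses of Theorem~\ref{thm:concrete} is essentially mechanical. By construction of $f$, every focal unit $i \in \Uset$ has $Z_i = 0$, so its realized exposure $h_i(Z) = (0, \Hi)$ lies in $\{a, b\} = \{(0,0), (0,1)\}$, making condition~\eqref{eq:c2} vacuous. For condition~\eqref{eq:c1}, any $Z' \in \Zset$ also satisfies $Z'_i = 0$ for $i \in \Uset$, hence $h_i(Z') \in \{a, b\}$; and $Z \in \Zset$ whenever $\w(\C \mid Z) > 0$. Theorem~\ref{thm:concrete} then gives that $T$ is imputable under $H_0^s$, and Theorem~\ref{th:abstract} yields validity of the associated conditional randomization test at any level $\alpha$.

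The main remaining step, and the one I expect will require the most care, is verifying that step~2 of the procedure draws from the correct conditional distribution $\pr(Z \mid \C)$. Because the test statistic depends on $Z$ only through the focal exposure vector $(h_i(Z))_{i \in \Uset}$, and each focal exposure is determined by the household-level assignment $\Hi$, it suffices to characterize the induced distribution of $\H$ given $\C$. Combining the two-stage design probability $\pr(Z) = \binom{K}{K_1}^{-1} \prod_{j : \H_j = 1} n_j^{-1}$ with $f(\Uset \mid Z) = \prod_{j: \H_j = 0} n_j^{-1} \prod_{j: \H_j = 1} (n_j - 1)^{-1}$ and summing over the $\prod_{j: \H_j = 1}(n_j - 1)$ assignments in $\Zset$ compatible with a fixed $\H$, the size-dependent factors cancel. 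The induced distribution of $\H$ given $\C$ is therefore uniform over the $\binom{K}{K_1}$ household-level assignments with $\sum_j \H_j = K_1$, which means the focal exposure vector is uniform over the $\binom{K}{K_1}$ ways of placing $K_1$ labels $b$ and $K - K_1$ labels $a$ on the focal units. This is exactly the distribution enumerated in step~2, so the p-value of step~3 coincides with the conditional p-value in Theorem~\ref{th:abstract}, completing the argument.
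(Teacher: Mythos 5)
Your proposal is correct and follows essentially the same route as the paper: it casts the procedure as an instance of Theorem~\ref{thm:concrete} with $f$ as in Equation~\eqref{eq:cond-f} and $g$ degenerate on the assignments keeping all focals untreated, checks conditions~\eqref{eq:c1}--\eqref{eq:c2} (trivially satisfied since every focal has $Z_i=0$), and then shows that the household-size factors $\prod_{j}n_j^{-1}$ cancel so that the induced distribution of focal exposures given $\C$ is uniform over the $\binom{K}{K_1}$ arrangements, i.e.\ exactly the permutation distribution of step~2. The paper's appendix proof performs the same cancellation, merely phrased through $\pr(H\mid\C)\propto\pr(\C\mid W)\pr(W)$ rather than your direct sum over $Z\in\Zset$ with $W(Z)=w$; if anything, your version is slightly more explicit about verifying the imputability conditions of Theorem~\ref{thm:concrete}, which the paper delegates to the main text.
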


\begin{proof}
Define
\begin{equation*}
    \UZ = \{\Uset \in \Uall \,:\,Z_i \iv(i \in \Uset) = 0,
            ~i=1, \ldots, N,~\text{and}~\sum_i \iv(i \in \Uset) R_{ij} = 1, 
			\text{for every household}~j\}.
\end{equation*}
In words, $\UZ$ is the set of all subsets of units for which no unit in 
the subset is treated under $Z$, and each household has exactly one unit 
in the subset.
Step 1 of the procedure in Proposition~1 chooses focals according to 
conditioning mechanism 
$m(\C \mid Z) = f(\Uset \mid Z) g(\Zset \mid \Uset,Z)$, where we define
\begin{align}
    f(\Uset\mid Z) & = \text{Unif}\{\UZ \}, \\
      \label{eq:def_g} 
      g(\Zset\mid \Uset, Z) & = \iv[\Zset = \{Z' : 
       h_i(Z') \in\{(0, 0), (0, 1)\}
       \text{ for all}~i\in\Uset
       \}].
\end{align}
That is, $f(\Uset\mid Z)$ is uniform on $\UZ$ and $g$ is degenerate on the 
set of assignments for which all units in $\Uset$ are either in control or 
exposed to spillovers. In what follows, we fix a conditioning event 
$\C = (\Uset, \Zset)$.

Let $H = H(Z)\in\{0, 1\}^K$ denote the exposure of focal units under 
$Z$, where we use $0$ for control and $1$ for spillovers. Also, let 
$W = W(Z)\in\{0, 1\}^K$ denote the household assignment under 
assignment vector $Z$. Since there is one focal per household and 
household assignment determines the exposure of a focal, $H$ and $W$ 
are equal almost surely:
\begin{equation*}
H(Z) = W(Z), \text{ for all}~Z,\text{ and so we can write}
~H = W, \text{ almost surely}.
\end{equation*}
For any $Z, Z'\in\Zset$, it holds that 
\begin{equation*}
g(\Zset \mid\Uset, Z) = g(\Zset\mid\Uset, Z').
\end{equation*}
This follows from definition of $g$ in Equation~\eqref{eq:def_g} 
since $g(\Zset \mid \Uset, Z)\equiv g(\Zset \mid \Uset)$ does not depend 
on $Z$ given a fixed $\Uset$; note that $\Uset$ depends on $Z$ itself, 
but still $g$ does not depend on $Z$ if $\Uset$ is given.

For any $w\in\{0, 1\}^K$, it holds that:
\begin{equation*}
\sum_{Z : W(Z) = w} f(\Uset \mid Z) \pr(Z \mid W=w) = \mathrm{const.}
\end{equation*}
To see this, first note that $\pr(Z \mid W) = \prod_{k: W_k=1}^K 1 / n_k$, 
where $n_k$ is the number of units in the household. Furthermore, 
\begin{equation*}
	\sum_{Z:W(Z)=w} f(\Uset \mid Z) = \prod_{k : W_k=0} 1/n_k.
\end{equation*}
Therefore, 
\begin{equation*}
\sum_{Z : W(Z) = w} f(\Uset \mid Z) \pr(Z \mid W=w)
= \prod_k 1/n_k = \mathrm{const}.
\end{equation*}
Actually this is equal to the marginal probability of the focal set,
$\pr(\Uset)$.

We now put things together and prove that the conditioning mechanism 
yields a randomization distribution that is uniform in its support. 
Fix a conditioning event $\C = (\Uset, \Zset)$. Then,
\begin{align}
\pr(H \mid \C)
& = \pr(W \mid \C)
~[\textit{\small from Step 1}]\nonumber\\
& \propto \pr(\C \mid W) \pr(W)\nonumber\\
& \propto 
\sum_{Z} \pr(\C, Z \mid W) \pr(W)\nonumber\\
& \propto 
\sum_{Z:W(Z)=W} \pr(\C \mid Z) 
\pr(Z \mid W)\pr(W)\nonumber\\
& \propto 
\sum_{Z:W(Z)=W} f(\Uset \mid Z) g(\Zset \mid \Uset, Z)
\pr(Z \mid W)\pr(W)\\
& \propto 
g(\Zset \mid \Uset) \pr(W) \sum_{Z:W(Z)=W} f(\Uset \mid Z) \pr(Z \mid W)
\nonumber\\
& \propto 
\pr(W)
\nonumber\\
& = {N\choose N_1}^{-1}.
\end{align}
From the definition of the test statistic:
\begin{equation*}
T(Z \mid Y, \C) = T(Z' \mid Y, \C) \quad \text{ if } \quad H(Z) = H(Z').
\end{equation*}
Therefore, we can write $T(Z\mid Y, \C) \equiv T(H\mid Y, \C)$. From the 
above, we know that the conditional distribution of the focals' exposure 
under the particular conditioning mechanism is a permutation of their 
exposures under $\Zobs$, as prescribed by the testing procedure of 
Proposition~1. This is sufficient for validity since the test statistic is in fact a function of $H$. 


\section{Additional discussion of alternative methods}

\subsection{Equivalence of tests from~\cite{athey2016exact} and~\cite{aronow2012general} for two-stage designs}

The tests described by~\citet{athey2016exact} and~\citet{aronow2012general}
coincide for testing spillover effects, $H_0^s$, in our two-stage randomized setting. We will show that the method of \cite{aronow2012general} is equivalent to our procedure, with $f(\Uset \mid Z) = f(\Uset)$.
Briefly, the method of \cite{aronow2012general} can be summarized
as follows:
\begin{enumerate}
	\item Draw a set of units $\Uset\subset \Uall$, uniformly at random, as 
    in \cite{athey2016exact}.
	\item Compute the p-value by using the 
	conditional randomization distribution 
	 $\pr(Z \mid \Uset,  Z_\Uset = \Zobs_\Uset)$, where $Z_\Uset$ is the 
     subvector of $Z$ that is restricted to the units in $\Uset$.
\end{enumerate}
The conditional randomization distribution is therefore equal to:
\begin{flalign*}
	\pr(Z \mid \Uset, Z_\Uset = \Zobs_\Uset) &\propto \pr(\Uset, Z_\Uset = \Zobs_\Uset \mid Z) \pr(Z) 
	\propto \pr(Z_\Uset = \Zobs_\Uset \mid \Uset, Z) 
    \pr(\Uset \mid Z) \pr(Z) \\
	&=  \iv(Z_\Uset = \Zobs_\Uset)  \pr(\Uset) \pr(Z).
\end{flalign*}
Now, consider a conditioning event $\C = (\Uset, \Zset)$ from a mechanism 
$m_f(\C \mid Z) = f(\Uset) g(\Zset \mid \Uset, Z)$, where according to 
Equation~(11) in the main paper is degenerate on the set:
\begin{align}
\label{eq:zset}
\Zset= [Z': h_i(Z') = (1,1) \text{ if } h_i(\Zobs) = (1,1) \text{ and } h_i(Z') \in \{(0,0), (0,1)\} \text{ otherwise,  for all } i \in \Uset].
\end{align}
Under this definition and the setting of spillover effects, for every unit $i\in\Uset$ in the focal set and every assignment vector $Z'\in\Zset$ in the test,
we will have either $Z_i'=0$ if $\Zobs_i=0$ or 
$Z_i'=1$ if $\Zobs_i=1$. Thus, if $Z'\in\Zset$ it follows that $Z'_{\Uset} = \Zobs_{\Uset}$. 
Suppose the reverse is true, that is, $Z'_{\Uset} = \Zobs_{\Uset}$. 
Consider unit $i$ in the focal set for which $\Zobs_i = 1$. 
Then, $Z'_i=1$ as well, and so $h_i(Z') = h_i(\Zobs)$ for such units. 
Consider unit $i$ in the focal set for which $\Zobs_i = 0$. 
Then, $Z'_i=0$ as well, and so $h_i(Z') = \in\{(0, 0), (0, 1)\}$, 
by definition of exposures. Thus, if $Z'_{\Uset} = \Zobs_{\Uset}$ 
it follows that $Z'\in\Zset$. 
Therefore, the two statements are equivalent, and the conditioning mechanism with $f(\Uset \mid Z) = f(\Uset)$ will yield the same test as in~\citet{athey2016exact} and~\citet{aronow2012general}.

\subsection{When the test of Athey et al. (2017) is a permutation test}

The method of Athey et al. can be cast in our framework, where 
$f(\Uset \mid Z) = f(\Uset)$ , i.e., the selection of focals does not 
depend on the observed assignment, and where the randomization 
distribution, $\pr(Z \mid \C)$, is uniform over the set $\Zset$ defined in 
Equation~\eqref{eq:zset}.
We denote by $\Ueff(Z) = \{i \in \Uset: Z_i = 0\}$. We denote by 
$H_i = h_i(Z)$ the exposure of unit $i$ under assignment vector $Z$.

\bigskip 

First, notice that $\Ueff(Z) = \Ueff(\Zobs)$, for every $Z\in\Zset$. Now, 
consider unit $i \in \Ueff(\Zobs)$. We have:
\begin{flalign*}
	\pr(H_i = (1,0) \mid Z \in \Zset) 
    &= \pr(Z_i =0, W_{[i]}=1 \mid Z \in \Zset) \\
    &= \frac{\pr(Z_i =0 \mid W_{[i]} = 1)
    \pr(W_{[i]} = 1)}{\pr(Z \in \Zset)} \\
    &= \frac{(n_i-1)/n_i  {N \choose N_1}}{\pr(Z\in \Zset)} \\
   &  \propto \frac{n_i-1}{n_i}.
\end{flalign*}
We thus have the constraint that for all $Z \in \Zset$:
\begin{flalign*}
	\sum_{i \in \Ueff(Z)} \iv\{H_i(Z) = (1,0)\}
    &= N_1^{\text{eff}}(\Zobs).
\end{flalign*}
In words, the number of exposed units is constant for all $Z \in \Zset$. 
Putting it all together, we see that $P(H \mid Z \in \Zset)$ is such that:
\begin{enumerate}
	\item $\sum_{i \in \Ueff} \iv\{H_i = (1,0)\} = N_1^{\text{eff}}$.
    \item For all $i \in \Ueff$, $\pr\{H_i = (1,0) \mid Z \in \Zset(\Uset, \Zobs)\} \propto (n_i-1)/n_i$.
\end{enumerate}
\end{proof}
   
This result implies that the method of~\citet{athey2016exact} can be 
implemented as a permutation test only when the households are of equal 
sizes. This is not true in our application, and not expected to be true 
more generally, and thus poses computational challenges in implementing 
the test of~\cite{athey2016exact}.

\section{Simulations and analysis details}  %

\subsection{Simulations}
\label{section:simulations}

We compare the power of the test we proposed in the previous section, which chooses the focal units
conditionally on $\Zobs$, to that of the test in \cite{athey2016exact} which chooses the focals 
unconditionally of $\Zobs$. 
We use the term ``unconditional focals'' to describe that approach, but we note that this could encompass selection of focals based on existing covariate information, such as a network between units. For example, \citet{athey2016exact} propose an approach where after a unit is selected as focal subsequent focal units are selected beyond a certain distance to the initial focal unit; this is known as the $\epsilon$-net approach.
Such approaches are still unconditional to the observed treatment assignment, $\Zobs$.

Figure~\ref{fig:extreme} illustrates the potential power gains, by
considering the extreme case of $K=500$ households of equal size $n=50$ with $K_1 = 250$ treated
households, and focusing on the 
power of the test of no primary effect $H_0^{p}$. 
If we are interested in testing the no spillover effect hypothesis $H_0^s$, the expected difference in the 
number of effective focal units between our test and the test of \cite{athey2016exact} decreases with $n$. 
In the case of the no primary effect hypothesis $H_0^p$, the difference increases with $n$. This 
phenomenon is illustrated in Figure~\ref{fig:power-by-nis}.

\begin{figure}[b!]
	\centering
	\includegraphics[scale=0.45]{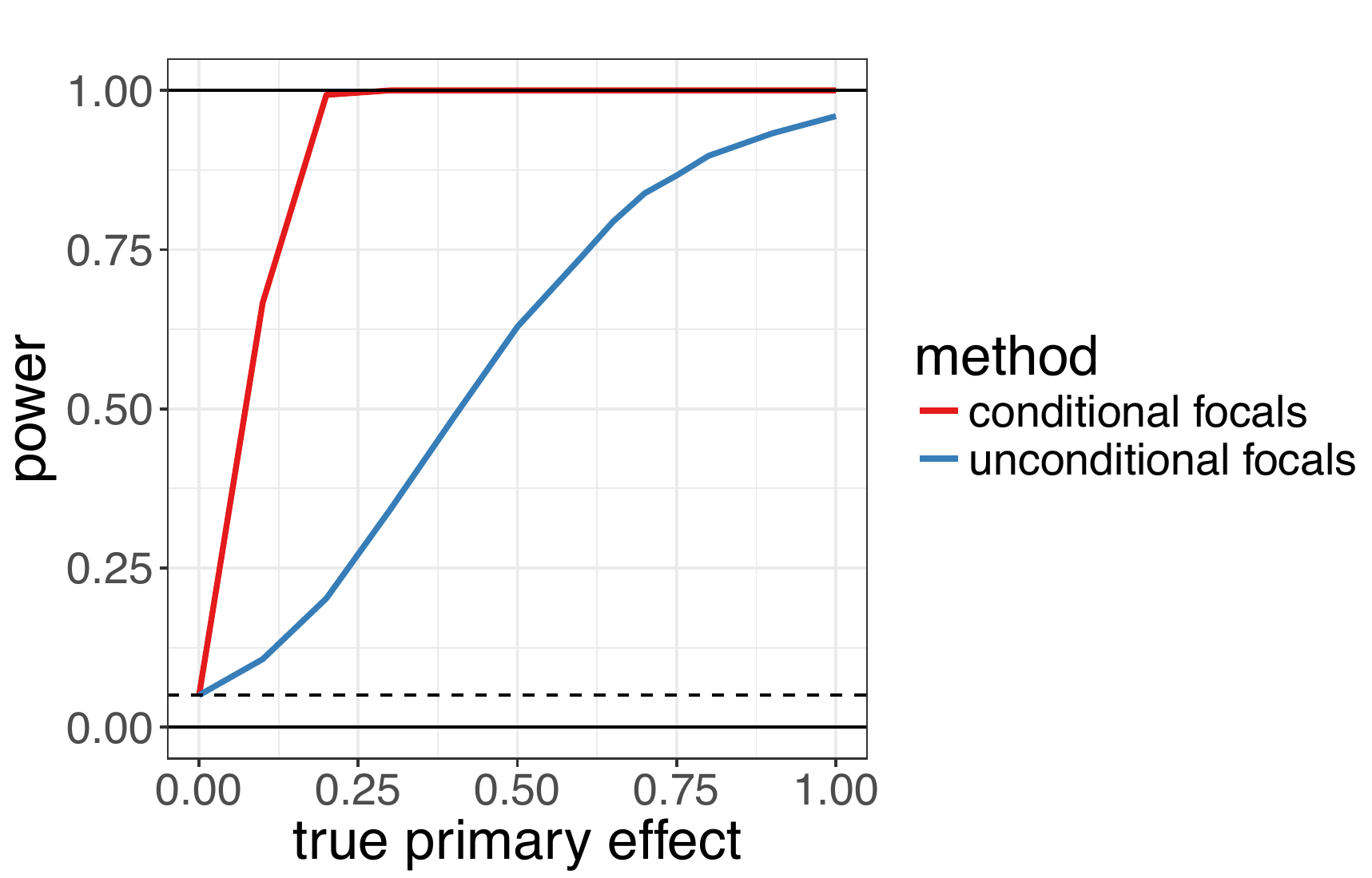}
	\caption{Power of the test of no primary effect obtained with 
    choice of focals unconditional to the observed assignment 
    (\cite{athey2016exact}) versus conditional choice, for different 
    true values of the true primary effect.}
	\label{fig:extreme}
\end{figure}

%


\begin{figure}[!t]
\centering
\begin{subfigure}{.5\textwidth}
  \centering
  \includegraphics[width=.9\linewidth]{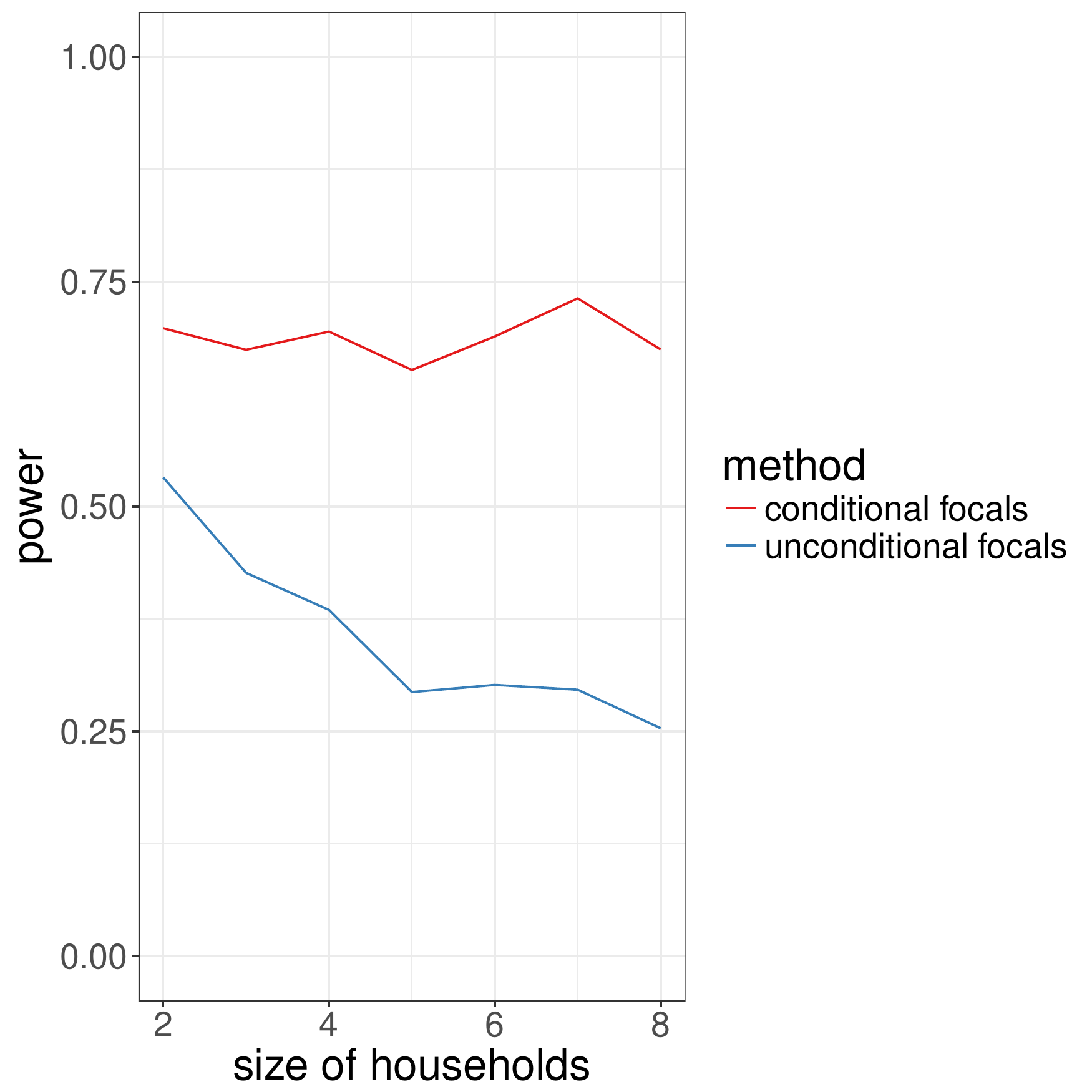}
\end{subfigure}%
\begin{subfigure}{.5\textwidth}
  \centering
  \includegraphics[width=.9\linewidth]{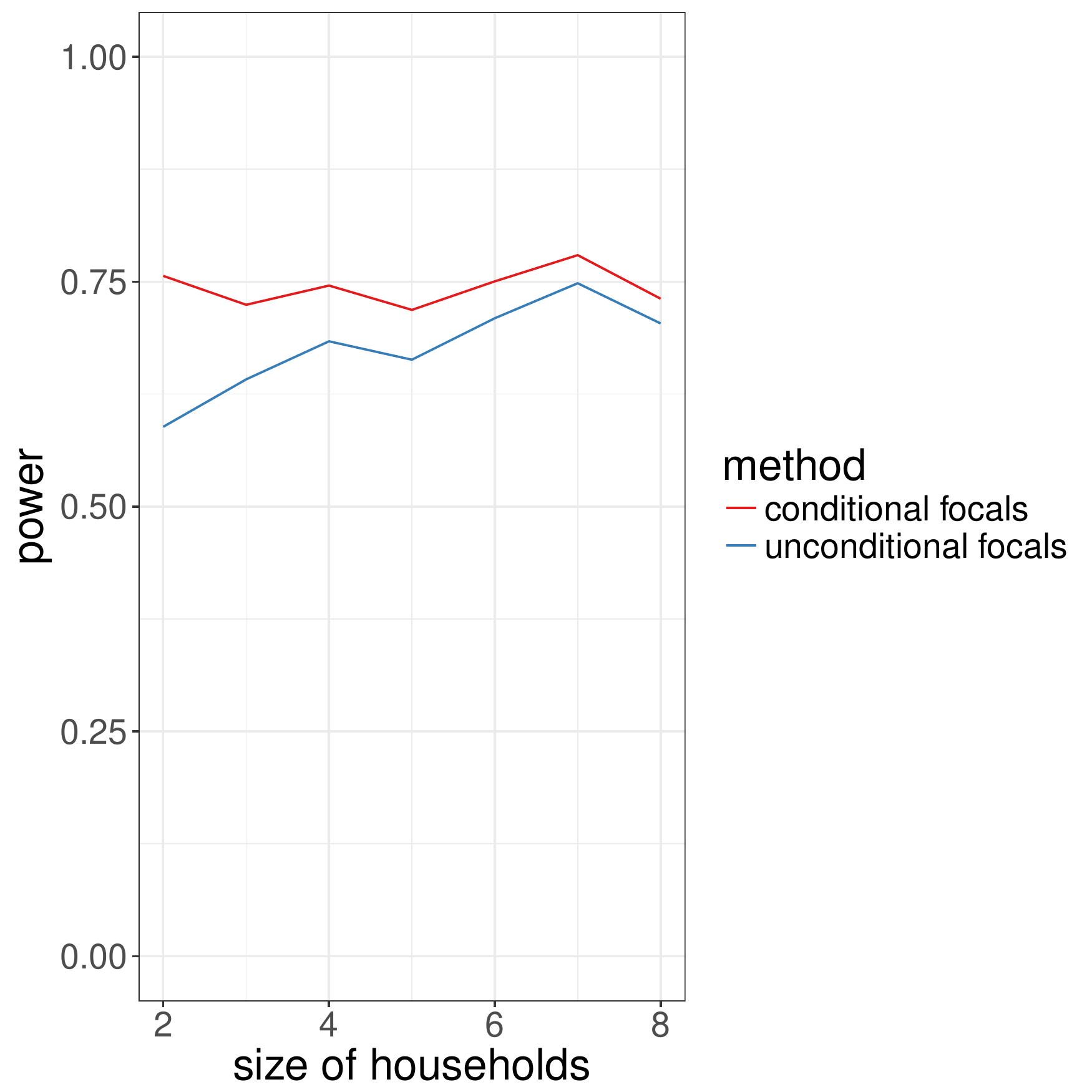}
\end{subfigure}
\caption{Power of the two methods for testing the null hypotheses of 
no primary effect, on the left, and no spillover effect, on the right, 
as a function of household size $n_i$.}
\label{fig:power-by-nis}
\end{figure}

\subsection{Details of analysis: covariate adjustment}

In all the analyses in the paper, covariates where taken into account 
via the same model-assisted approach used in Section~7 and Section~9.2 
of \cite{basse2016analyzing}. Briefly, we use a holdout set to estimate 
the parameter of a regression, then we use those estimators parameters to 
obtain predicted values $\{\hat{Y}_i\}_i$ for the outcomes in our sample 
and compute the residuals $\hat{e}_i = Y_i^{\mathrm{obs}} - \hat{Y}_i$. We then 
apply the conditional testing methodology to the residuals, instead of the 
original potential outcomes; in that way, the residuals can be thought of 
as transformed outcomes. Note that this approach is similar to that used by 
\cite{rosenbaum2002covariance}.

\subsection{Details of analysis: confidence intervals}
We ran an additional analysis comparing the size of confidence intervals 
for our method and for that of \cite{athey2016exact}. Specifically, for 
each of $H_0^s$ and $H_0^p$, we drew 100 focal sets using our method, and 
100 using the method of \cite{athey2016exact}, and computed the associated 
confidence intervals, obtained by inverting sequences of Fisher 
randomization tests \citep{rosenbaum2002covariance}. 
Figure~\ref{fig:ci-size} summarizes the results. We see that our method 
leads to smaller confidence intervals compared to the method of 
\cite{athey2016exact}, and that the difference is larger for the primary 
effect than for the spillover effect.

\begin{figure}
	\centering
	\includegraphics[scale=0.4]{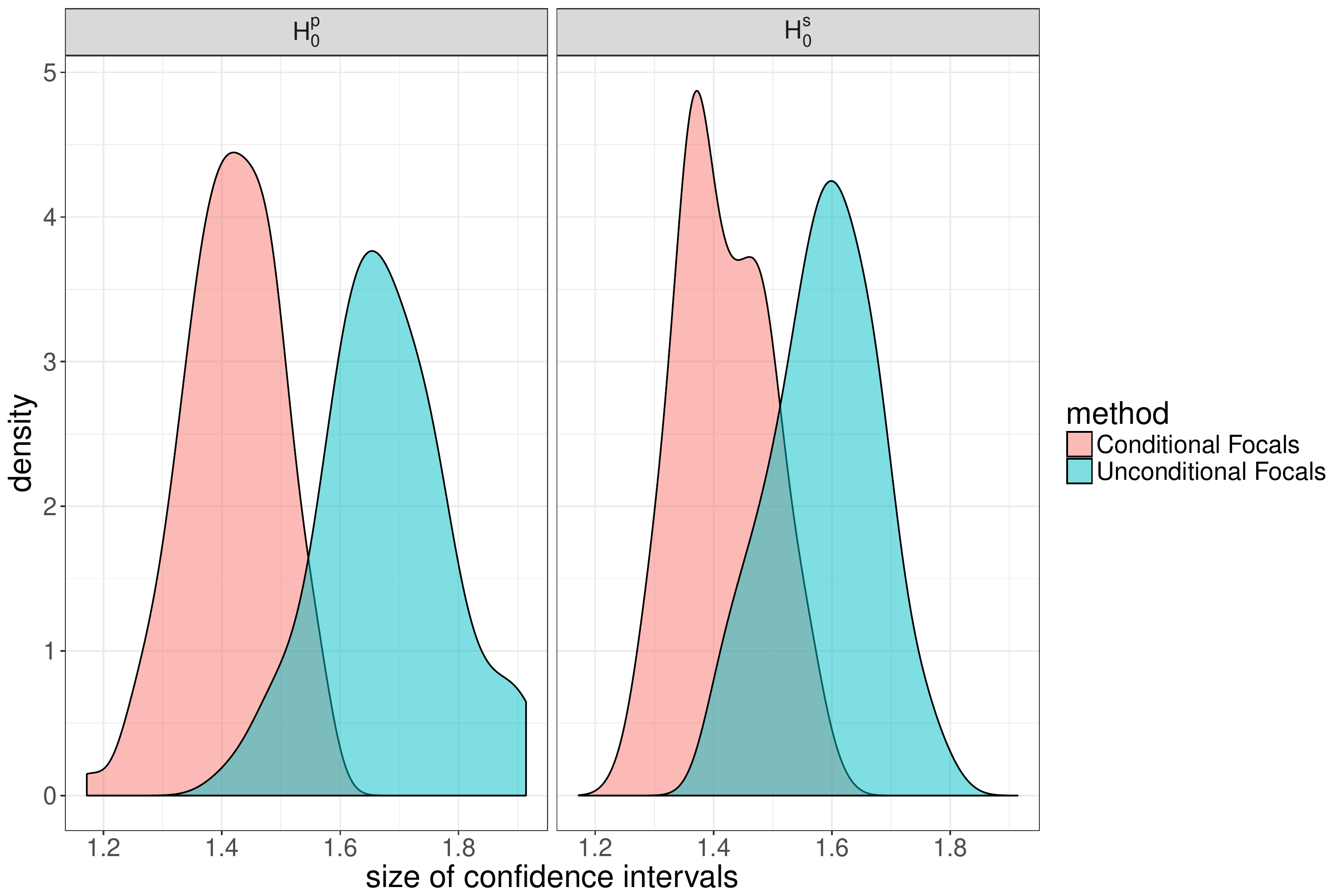}
    \caption{Size of the confidence intervals for the primary and 
    spillover effects obtained by the two methods.}
    \label{fig:ci-size}
\end{figure}

\subsection{Details of analysis: point estimates}

Point estimates are obtained using a variant of the Hodges-Lehmann estimator \citep{hodges1963estimates}. Specifically, for a 
conditioning event $\C$, we numerically solve the equation $E( T \mid \C, H^P_\tau) = \Tobs$,
where $H^P_\tau$ is the null hypothesis $Y_i(1,1) = Y_i(0,0)$, by considering a grid of values for $\tau$, and 
computing the expectation of the null distribution of $T$ under the hypothesis $H^P_\tau$ and keeping the 
value $\hat{\tau}$ of $\tau$ that is closest to $\Tobs$.

\subsection{Details of analysis: results for testing $H_0^P$}

The median value of the Hodges-Lehmann for the primary effect is 
approximately equal for both choices of functions $f$ and is approximately 
equal to  $-1.5$ 
days, with associated confidence interval $[-2.2, 0.75]$ for our method, 
and $[-2.3, -0.8]$ for the method of \cite{athey2016exact}. The average 
length of confidence intervals obtained with our method is $1.4$ days, 
versus $1.6$ days for the method of \cite{athey2016exact}. The fraction of 
focals leading to a p-value below $0.05$ is $100\%$ in our case, based 
on a Monte-Carlo estimate from 100 replications, versus $92\%$ for the 
method in \cite{athey2016exact}.


\section{Comparison of powers of tests}
\label{section:comparing}

\subsection{Model, p-values and power}

In this section, we make an approximate theoretical analysis of the power of our test 
and the power of the test by \citet{athey2016exact}. Our analysis is 
performed under two approximations. First, in the context of classical 
Fisher randomization tests, we argue that, in general, tests that are 
balanced and use more units are more powerful. So, balance and size of 
treatment arms can be used as a proxy for the power of the test.
Second, we argue that since in the two-stage randomization case,  
our test and the test in \cite{athey2016exact} can be conceived as 
classical Fisher randomization tests run on the focal 
units, the aforementioned power 
approximation for the classical Fisher randomization test applies. 

Consider a classical Fisher randomization test, with 
complete randomization where $N_1$ out of $N$ units are treated. Let 
$p = N_1/N$. Suppose that that the true effect is constant 
additive $\tau$, and that we test for the null of no effect $H_0$. In order 
to give concrete analytical heuristics, we consider a model for the 
potential outcomes and focus on asymptotics; see also \cite{lehmann2006testing} for this approach:
\begin{equation*}
	Y_i(Z_i) \sim \tau Z_i + \mathcal{N}(\mu, \sigma^2).
\end{equation*}
As mentioned, we will focus our argument on asymptotic heuristics. Denote 
by $V = var(T \mid \Yobs, H_0)$ the randomization variance of the test 
statistics conditional on $\Yobs$, and assuming $H_0$ is true. We have, for 
large $N$:
\begin{equation*}
	V = \frac{1}{N}\bigg[ \frac{\sigma^2}{p(1-p)} + \tau^2 \bigg].
\end{equation*}
Denote by $V^{\mathrm{obs}}$ the variance of the test statistic $V^{\mathrm{obs}} = var(T)$. We have, for large $N$,
\begin{equation*}
	V^{\mathrm{obs}} = V - \frac{\tau^2}{N},
\end{equation*}
and so by applying the appropriate CLT's, we have:
\begin{equation*}
	\frac{T}{V^{1/2}} \approx \mathcal{N}(0,1), \qquad \frac{T^{\mathrm{obs}} - \tau}{(V^{\mathrm{obs}})^{1/2}} 
	\approx \mathcal{N}(0,1).
\end{equation*}
Note the application of the CLT is heuristic here, and some regularity 
conditions are required. We can then obtain an approximation of the 
distribution of a one-sided p-value for large N:
\begin{flalign*}
	\pval &= \pr(T \geq T^{\mathrm{obs}}) \\
	&\approx 1 - \Phi\bigg( \frac{T^{\mathrm{obs}}}{V^{1/2}} \bigg),
\end{flalign*}
using the asymptotics from above. We can then verify that:
\begin{flalign*}
	\frac{T^{\mathrm{obs}}}{V^{1/2}} = \frac{T^{\mathrm{obs}} - \tau}{(V^{\mathrm{obs}})^{1/2}} (1-C)^{1/2} + (NC)^{1/2} \\
	\approx W (1-C)^{1/2} + (NC)^{1/2},
\end{flalign*}
where $W \sim \mathcal{N}(0,1)$ and $C = \tau^2 [\sigma^2 / \{p(1-p)\} + \tau^2]^{-1}$ and 
so :
\begin{equation}\label{eq:pval-approx}
	\pval = 1 - \Phi(W (1 - C)^{1/2} + (NC)^{1/2}).
\end{equation}
We can use the approximation of Equation~\eqref{eq:pval-approx} to deal with the power. For $\alpha \in [0,1]$, the power of the test at level 
$\alpha$ will be
\begin{equation*}
	\beta_\alpha = \pr(\pval \leq \alpha),
\end{equation*}
but we verify that
\begin{flalign*}
	\pval \leq \alpha \quad \Leftrightarrow \quad W \geq \frac{\Phi^{-1}(1-\alpha),
	- (NC)^{1/2}}{(1-C)^{1/2}}
\end{flalign*}
and so the power of the test will be approximately:
\begin{equation}
	\beta_\alpha = 1 - \Phi\bigg( \frac{\Phi^{-1}(1-\alpha) - (NC)^{1/2}}{(1-C)^{1/2}}\bigg).
\end{equation}

\subsection{Comparing classical tests}
\label{section:comparing-tests}

We are interested in comparing tests with different proportions $p$ of treated units, and
with different numbers $N$ of units. We will denote these quantities by $N^{(1)}$ and $N^{(2)}$
for the number of units, and $p^{(1)}$ and $p^{(2)}$ for the proportions. Let $\beta^{(1)}$ and
$\beta^{(2)}$ be the associated powers. Finally, notice that:
\begin{flalign*}
	\beta^{(1)} \leq \beta^{(2)} &\Leftrightarrow  \frac{\Phi^{-1}(1-\alpha) - 
	(N^{(1)}C^{(1)})^{1/2}}{(1-C^{(1)})^{1/2}}
	\geq \frac{\Phi^{-1}(1-\alpha) - (N^{(2)}C^{(2)})^{1/2}}{(1-C^{(2)})^{1/2}} \\
	&\Leftrightarrow \gamma^{(1)} \geq \gamma^{(2)}
\end{flalign*}
where $\gamma^{(1)} = \{\Phi^{-1}(1-\alpha) - 
(N^{(1)}C^{(1)})^{1/2}\}/\{(1-C^{(1)})^{1/2}\}$

Suppose that both tests have the same number of units 
$N^{(1)} = N^{(2)} = N$, but different fractions of treated units 
$p^{(1)} \neq p^{(2)}$. We have
\begin{flalign*}
	\gamma^{(1)} - \gamma^{(2)} &= N^{1/2} \bigg( \frac{(C^{(2)})^{1/2}}{1-(C^{(2)})^{1/2}} - 
	\frac{(C^{(1)})^{1/2}}{1 - (C^{(1)})^{1/2}}\bigg)
	 + \bigg( \frac{\Phi^{-1}(1-\alpha)}{(1-C^{(1)})^{1/2}} -  
	 \frac{\Phi^{-1}(1-\alpha)}{(1-C^{(2)})^{1/2}}\bigg) \\
	 &\rightarrow N^{1/2} \bigg( \frac{(C^{(2)})^{1/2}}{1-(C^{(2)})^{1/2}} 
	 - \frac{(C^{(1)})^{1/2}}{1 - (C^{(1)})^{1/2}}\bigg)
\end{flalign*}
and so for large $N$, 
\begin{flalign*}
	\gamma^{(1)} - \gamma^{(2)}  \geq 0 &\Leftrightarrow  \frac{(C^{(2)})^{1/2}}{1-(C^{(2)})^{1/2}} 
	- \frac{(C^{(1)})^{1/2}}{1 - (C^{(1)})^{1/2}}  \geq 0 \\
	&\Leftrightarrow p^{(1)}(1-p^{(1)}) \leq p^{(2)}(1-p^{(2)})\\
	&\Leftrightarrow |p^{(1)} - \frac{1}{2}| \geq |p^{(2)} - \frac{1}{2}|.
\end{flalign*}
So in conclusion:
\begin{equation*}
\beta^{(1)} \leq \beta^{(2)} \quad \Leftrightarrow \quad |p^{(1)} - \frac{1}{2}| \geq |p^{(2)} - \frac{1}{2}|
\end{equation*}
which, in words, means that the balanced test has more power 
asymptotically.

Suppose that $N^{(1)} \neq N^{(2)}$ but that the fractions of treated units 
in each test is identical. That is, $p^{(1)} = p^{(2)} = p$. The immediate 
consequence is that $C^{(1)} = C^{(2)} = C$, and so:
\begin{equation*}
	\gamma^{(1)} - \gamma^{(2)} 
    = \frac{C^{1/2}}{1-C^{1/2}} \bigg((N^{(2)})^{1/2} 
    - (N^{(1)})^{1/2}\bigg),
\end{equation*}
and so:
\begin{equation*}
	\beta^{(1)} \leq \beta^{(2)} \quad \Leftrightarrow N^{(1)} \leq N^{(2)},
\end{equation*}
which in words means that the test with more units has more power 
asymptotically. 

\subsection{Comparing our test with test of \cite{athey2016exact}}

If we restrict our attention to the special case where all households have equal size $n_i = n$, then both our 
method and the method of \cite{athey2016exact} can be seen as classical Fisher randomization tests applied 
 on a set of "effective" focal units, where the set of "effective focals" is always at least as large with our 
 method as in the method of \cite{athey2016exact}, and is always balanced if the initial assignment $\pr(Z)$ is 
 balanced. We can then leverage the result of Section~\ref{section:comparing-tests} to argue heuristically that for classical Fisher randomization tests, larger and more balanced is generally better, and so we expect our method to lead to more powerful test. This has been confirmed in the simulations of Section~\ref{section:simulations} and in the analysis.

\subsection{Comparison with unconditional focal selection under a different 
design}
In this section, we perform an analysis outside of the two-stage design setting to illustrate the generality of our framework. We assume there is a network between units such that $\mathcal{N}_i$ denotes the neighborhood of unit $i$. As in the two-stage setting, we will show that being able to condition on the observed treatment assignment, which is possible in our framework, can lead to better randomization tests.

We consider a network between units and the following exposure functions:
\begin{equation*}
	h_i(Z) = \begin{cases}
    	a &\text{ if } \quad Z_i = 1, \\
        b &\text{ if } \quad Z_i = 0, \,\, \sum_{j \in \mathcal{N}_i} Z_j < d,\\
        c &\text{ if } \quad Z_i = 0, \,\, \sum_{j \in \mathcal{N}_i} Z_j > d, \\
    \end{cases}
\end{equation*}
and assume that $N_1$ units are treated completely at random in the network, and that we 
wish to test the null hypothesis:
\begin{equation*}
	H_0: Y_i(Z) = Y_i(Z'), i=1, \ldots, N, \,\,\, \text{ for all } Z, Z': h_i(Z), h_i(Z') \in \{b,c\}.
\end{equation*}    
This example is very different from the two-stage randomization setting 
considered in the main text, but there 
is one commonality: the units who received treatment are useless for 
testing $H_0$, and so it is wasteful to include them in the focal set.
It is easy to verify that if focals are chosen completely at random, the 
distribution of the effective number of focals is 
$|\textsc{eff}(\Uset)| \sim M - \text{Hypergeom}(N, N_1, M)$, and so 
the expected number of focal units is 
$E\{|\textsc{eff}(\Uset)|\} = M - M (N_1/N)$. 
In the case where half the units are treated, that is $N_1 = N/2$, we 
have:
\begin{equation*}
	E\{|\textsc{eff}(\Uset)|\} = \frac{M}{2},
\end{equation*}
so in effect we lose half of the focal units. Choosing focals 
unconditionally but based on $\epsilon$-nets would be better than choosing 
the focals completely at random but would not solve the fundamental reason 
why power is lost. Moreover, if choosing focals based on $\epsilon$-nets 
is helpful, then it could always be combined with conditioning on the 
observed assignment to yield an even more powerful test.

To illustrate our framework in this setting, 
we could use following procedure:
\begin{enumerate}
	\item Draw $Z$, completely at random with 
    $N_1$ treated units, and $N_0$ control units.
	\item Choose $M$ focal units at random among the $N_0$ units with 
    $Z_i = 0$. Let $\Uset$ be the set of focal units. 
    \item Draw $Z' \sim \pr(Z' \mid \Uset)$ as follows. Set $Z'_i = 0$ for 
    all $i \in \Uset$. Then choose $N_0 - M$ units at random among the 
    $N - M$ non-focal units, and set $Z'_i = 0$ for these units. Finally, 
    set $Z_i = 1$ for the remaining $N_1$ units.
\end{enumerate}

We claim that the abovementioned procedure in Step 3 samples indeed from the correct conditional randomization distribution.
\begin{proof}
    By definition of the procedure in Steps 2 and 3, it holds that $\pr(Z') \propto 1 $ if $\sum_i Z_i' = N_1$,
  and also $\pr(\Uset \mid Z') = $ const., if $|\Uset| = M$ and $Z'_i=0$ for every $i\in\Uset$.
  Therefore,
    $\pr(Z' \mid \Uset) = \text{Unif}(\Zset(\Uset))$, where 
    $\Zset(\Uset) = \{Z: \text{for all } i \in \Uset, Z_i = 0 \,\, \text{ and } 
    \,\, \sum_i Z_i = N_1\}$. Which is what step 3 does.
\end{proof}

Note that in this case our approach does not lead to a permutation 
test; and neither does the method of Athey et al. Nevertheless, it leads 
to a procedure that is easily implementable and that uses more information 
than that of Athey et al.


\section{Testing the null hypothesis of no primary effect}

The paper focused on testing the null hypothesis of no spillover effects $H_0^S$. In this section, we 
briefly give equivalent results for testing the null hypothesis of no primary effect $H_0^P$. We omit 
the proofs, since they follow exactly the same outlines as the proof for $H_0^S$. A simple choice of $f$ function for testing the null hypothesis of no primary effect is
\begin{equation*}
	f(\Uset \mid Z) = \text{Unif}\{\mathbb{U}^{(P)}(Z)\},
\end{equation*}
where
\begin{equation*}
	\mathbb{U}^{(P)}(Z) = \{\Uset \in \Uall: Z_i = 1 \Rightarrow i \in \Uset, \,\,\, \text{for all } i \in \Uset \quad \text{ and } \quad \sum_i \iv(i \in \Uset) R_{ij} = 1, \text{for every household } j\}.
\end{equation*}
%
If applied to Theorem~2, this choice of $f$ leads to the following procedure, which mirrors that of 
Proposition~1:
\begin{enumerate}
	\item In control households, $W_j = 0$, choose one unit at random. 
    In treated households, $W_j=1$, choose the treated unit as focal.
	
	\item Compute the distribution of the test statistic Equation~\eqref{eq:T-contrast} induced by all 
	 permutations of exposures on focal units, using $a = (0,0)$ and $b=(1,1)$ as 
	the contrasted exposures.
	
	\item Compute the p-value.
\end{enumerate}
This procedure is valid conditionally and marginally for testing $H_0^P$.


\section{Notes on the choice of exposure mapping $h()$}

\subsection{More complex exposure mappings}

The class of null hypotheses that our method is designed to test is summarized in Equation~(7) of our 
manuscript, reproduced below for convenience:
\begin{equation}
\label{eq:test}
H_0: Y_i(Z) = Y_i(Z'), (i=1, \ldots, N) \,\, \text{for all} \,\, Z, Z' \text{ for which } 
h_i(Z), h_i(Z') \in \{a,b\},
\end{equation}
for some exposure function $h$, the choice of which is limited by a few theoretical and practical considerations. 
The only strong theoretical constraint implicit in Equation~(7) of the manuscript is that the two exposures 
$a$ and $b$ being contrasted must be well defined for all units under consideration. 
For instance, in the test of no spillovers $H_0^s$, the two exposures contrasted are the spillover exposure 
$(1,0)$, and the control exposure $(0,0)$, which are well defined for all units. If we had households with a 
single individual, then the exposure $(1,0)$ would not be defined for that unit and the null hypothesis of 
Equation~(7) would consequently be ill-posed if it included that unit. 

Still, the formulation in Equation~\eqref{eq:test} provides enough flexibility to test a wide variety of null hypotheses. Here, we illustrate with a couple of short but representative examples on network interference. Similar to~\cite{athey2016exact}, let $G_{ij}=1$ if units $i$ and $j$ are neighbors in the network, 
and $G_{ij}=0$ otherwise. By convention, $G_{ii}=0$ for all $i$.

Suppose we want to test spillovers on control units from first-order neighbors.
Then, we could define:
\begin{equation*}
	h_i(Z) = \begin{cases}
		a &\text{if $Z_i = 1$},\\
		b &\text{if $Z_i = 0$, \, $\sum_{j} G_{ij} Z_j > 0$},\\
		c &\text{if $Z_i = 0$, \, $\sum_{j} G_{ij} Z_j = 0$}.
	\end{cases}	
\end{equation*}
Now testing the hypothesis in Equation~\eqref{eq:test} 
contrasting the exposures $b$ and $c$ defined above will test whether there are spillovers on control units.

As another example, suppose we want to test spillovers on control units from up to second-order neighbors. Let $H_{ij} = 1$ if $i$ and $j$ are second-order neighbors but not first-order neighbors, so $G_{ij}=0$.
Then, we could define:
\begin{equation*}
	h_i(Z) = \begin{cases}
		a&\text{if $Z_i=1$},\\
		b&\text{if $Z_i=0$, \, $\sum_{j} (G_{ij} + H_{ij}) Z_j > 0$},\\
		c&\text{if $Z_i=0$, \, $\sum_{j} (G_{ij} + H_{ij}) Z_j  = 0$}.
	\end{cases}
\end{equation*}
Now testing the hypothesis in Equation~\eqref{eq:test} 
contrasting the exposures $b$ and $c$ defined above will test whether there are spillovers on control units from first-order or second-order neighbors. 
We could also test the hypothesis that there are no second-order 
spillovers without putting constraints on first-order spillovers. 
For that test, we could define:%
\begin{equation*}
	h_i(Z) = \begin{cases}
		a &\text{if $Z_i = 1$},\\
		b &\text{if $Z_i = 0$, \, $\sum_{j} G_{ij} Z_j > 0$},\\
		c &\text{if $Z_i = 0$, \, $\sum_{j} G_{ij} Z_j = 0$, \, 
		$\sum_{j} H_{ij} Z_j > 0$},\\
		d &\text{if $Z_i = 0$, \, $\sum_{j} (G_{ij} + H_{ij}) Z_j  = 0$}.
	\end{cases}
\end{equation*}
Now testing the hypothesis in Equation~\eqref{eq:test} 
contrasting the exposures $c$ and $d$ defined above will test whether there are spillovers on control units from second-order neighbors only.  We can follow similar approaches for testing higher than second-order 
spillovers. 

We now consider an example closer to the scenario of our application. Consider the same design as in 
our manuscript, but assume that all households have $n=3$ units. We are interested in testing whether 
an untreated unit in a treated household receives a different spillover if the eldest of its two siblings is treated 
compared to the spillover received if the youngest of its two siblings is treated. 

In order to test this null 
hypothesis, we need to consider a more complex exposure mapping than the one in our manuscript.
Let $E_i \in \{0,1\}$ be the treatment assignment of the eldest of unit $i$'s two siblings, and consider the exposure mapping:
$$h_i(Z) = (H_i, Z_i, E_i).$$
Each unit now has four potential outcomes:
\begin{equation*}
	Y_i(Z) \in \{ Y_i(1,1,0), Y_i(0,0,0), Y_i(1,0,1), Y_i(1,0,0)\},
\end{equation*}
the other combinations being impossible. With this exposure mapping the 
null hypothesis of no differential spillover effect from the eldest sibling can be written as:
\begin{equation*}
	H_0: Y_i(1,0,1) = Y_i(1,0,0) \,\, (i=1,\ldots, N).
\end{equation*}

\subsection{Exposure mappings and the choice of test statistic}

The choice of test statistic $T$ is related to the choice of exposure mapping $h$ to the extent that it provides a good estimate of the 
differential effect between exposures $a$ and $b$ in Equation~\eqref{eq:test}. Furthermore, if we have 
some prior belief about the potential outcomes and the interference structure, it can be incorporated in the test statistic.~\cite{athey2016exact} have a nice and insightful discussion about possible test statistics in Section 5.3 of their paper, which is applicable in our setting as well.

\end{document}